\renewcommand{\epsilon}{\varepsilon}
\tikzstyle{vertex}=[circle, draw,fill=gray!30, inner sep=0pt, minimum size=16pt]
\tikzstyle{svertex}=[circle, draw,fill=gray!30, inner sep=0pt, minimum size=10pt]
\tikzstyle{sgvertex}=[circle, draw,fill=gray!15, inner sep=0pt, minimum size=12pt]
\tikzstyle{ssgvertex}=[circle, draw,fill=gray!15, inner sep=0pt, minimum size=6pt]
\tikzstyle{sssgvertex}=[circle, draw,fill=gray!15, inner sep=0pt, minimum size=4pt]
\tikzstyle{sssagvertex}=[circle, draw,fill=gray!15, inner sep=0pt, minimum size=3.2pt]
\tikzstyle{ssssgvertex}=[circle, draw,fill=gray!15, inner sep=0pt, minimum size=2pt]
\newtheorem{theorem}{Theorem}
\newtheorem{proposition}[theorem]{Proposition}
\newtheorem{lemma}[theorem]{Lemma}
\newtheorem{conjecture}[theorem]{Conjecture}
\theoremstyle{definition}
\newtheorem{definition}[theorem]{Definition}
\newtheorem{remark}[theorem]{Remark}
\DeclareMathOperator{\spn}{span}
\DeclareMathOperator{\rank}{rank}
\DeclareMathOperator{\polylog}{polylog}
\newcommand{\OPT}{\ensuremath{\textrm{OPT}}}
\newcommand{\E}{\mathbb{E}}
\begin{document}

\title{Semi-Streaming Algorithms for Submodular Matroid Intersection\footnote{This research was supported by the Swiss National Science Foundation project
200021-184656 “Randomness in Problem Instances and Randomized Algorithms.”}}
 \author{
 Paritosh Garg\\
   EPFL\\
  \texttt{paritosh.garg@epfl.ch}
     \and
   Linus Jordan\\
   EPFL\\
  \texttt{linus.jordan@bluewin.ch}
     \and
   Ola Svensson\\
  EPFL\\
   \texttt{ola.svensson@epfl.ch}
 }
\date{}

\maketitle

%
%

%
%
%
%

\begin{abstract}

While the basic greedy algorithm gives a semi-streaming algorithm with an approximation guarantee of $2$ for the \emph{unweighted} matching problem, it was only recently that Paz and Schwartzman obtained an analogous result for weighted instances. Their approach  is based on the versatile local ratio technique and also applies to generalizations such as weighted hypergraph matchings. However, the framework for the analysis fails for the  related problem of weighted matroid intersection and as a result the approximation guarantee for weighted instances did not match the factor $2$  achieved by the greedy algorithm for unweighted instances.Our main result closes this gap by developing a semi-streaming algorithm with an approximation guarantee of $2+\epsilon$ for \emph{weighted} matroid intersection, improving upon the previous best guarantee of $4+\epsilon$. 
Our techniques also allow us to generalize recent results by Levin and Wajc on submodular maximization subject to matching constraints to that of matroid-intersection constraints. 

While our algorithm is an adaptation of the local ratio technique used in previous works, the analysis deviates significantly and relies on structural properties of matroid intersection, called kernels. Finally, we also conjecture that our algorithm gives a $(k+\epsilon)$ approximation for the intersection of $k$ matroids but prove that new tools are needed in the analysis  as the used structural properties fail for $k\geq 3$.



\end{abstract}

\section{Introduction}



For large problems, it is often not realistic that the entire input can be stored in random access memory so more memory efficient algorithms are preferable.  A popular model for such algorithms is the (semi-)streaming  model (see e.g.~\cite{Muthu}): the elements of the input are fed to the algorithm in a stream and the algorithm is required to have a small memory footprint. 

Consider the classic maximum matching problem in an undirected graph $G=(V,E)$.  An algorithm in the semi-streaming model\footnote{This model can also be considered in the multi-pass setting when the algorithm is allowed to take several passes over the stream. However, in this work we focus on the most basic and widely studied setting in which the algorithm takes a single pass over the stream.} is fed the edges one-by-one in a stream $e_1, e_2, \ldots,$ $e_{|E|}$ and at any point of time the algorithm is only allowed $O(|V|\polylog(|V|))$ bits of storage.  The goal is to output a large matching $M \subseteq E$ at the end of the stream. Note that the allowed memory usage is sufficient for the algorithm to store a solution $M$ but in general it is much smaller than the size of the input since the number of edges may be as many as  $|V|^2/2$. Indeed, the intuitive difficulty in designing a semi-streaming algorithm is that the algorithm needs to discard many of the seen edges (due to the memory restriction) without knowing the future edges and still return a good solution at the end of the stream.  

For the unweighted matching problem, the best known semi-streaming algorithm is the basic greedy approach:
    \begin{center}
\begin{minipage}{0.95\textwidth}
\begin{mdframed}[hidealllines=true, backgroundcolor=gray!15]
    Initially, let $M= \emptyset$. Then for each edge $e$ in the stream, add it to $M$ if $M\cup \{e\}$ is a feasible solution, i.e., a matching; otherwise the edge $e$ is discarded.
\end{mdframed}
\end{minipage}
\end{center}
The algorithm uses space $O(|V| \log |V|)$ and a simple proof shows that it returns a $2$-approximate solution in the \emph{unweighted} case, i.e, a matching of size at least half the size of an maximum matching.  However, this basic approach fails to achieve any approximation guarantee for \emph{weighted graphs}. 

Indeed, for weighted matchings, it is non-trivial to even get a small constant-factor approximation. One way to do so is to replace edges if we have a much heavier edge. This is formalized in \cite{Feigenbaum} who get a $6$-approximation. Later, \cite{McGregor} improved this algorithm to find a $5.828$-approximation; and, with a more involved technique, \cite{Crouch} provided a $(4+\epsilon)$-approximation. 

It was only in recent breakthrough work~\cite{Paz} that the gap in the approximation guarantee between unweighted and weighted matchings was closed. Specifically, \cite{Paz} gave a semi-streaming algorithm for weighted matchings with an approximation guarantee of $2+\epsilon$ for every $\epsilon>0$. Shortly after,  \cite{Wajc} came up with a simplified analysis of their algorithm, reducing the memory requirement from $O_\epsilon(|V|\log
^2 |V|)$ to $O_\epsilon(|V|\log |V|)$. These results for weighted matchings are tight (up  to the $\epsilon$) in the sense that any improvement would also improve the state-of-the-art in the unweighted case, which is a long-standing open problem.

The algorithm of~\cite{Paz} is an elegant use of the local ratio technique~\cite{lrsurvey, lrtheorem} in the semi-streaming setting.  While this technique is very versatile and it readily generalizes to weighted hypergraph matchings, it is much harder to use it for the related problem of weighted matroid intersection. 
This is perhaps surprising as  many of the prior results for the matching problem  also applies to the matroid intersection problem in the semi-streaming model (see \cref{sec:prelim} for definitions). Indeed, the greedy algorithm still returns a $2$-approximate solution in the unweighted case and the algorithm in~\cite{Crouch} returns a $(4+\epsilon)$-approximate solution for weighted instances. 
So, prior to our work, the status of the matroid intersection problem was that of the matching problem \emph{before}~\cite{Paz}. 

We now describe on a high-level the reason that the techniques from~\cite{Paz} are not easily applicable to matroid intersection and our approach for dealing with this difficulty. The approach in~\cite{Paz} works  in two parts, first certain elements of the stream are selected and added to a set $S$, and then at the end of the stream a matching $M$ is computed by the greedy algorithm that inspects the edges of $S$ in the reverse order in which they were added. This way of constructing the solution $M$ greedily by going backwards in time is a standard framework for analyzing algorithms based on the local ratio technique.
Now in order to adapt their algorithm to matroid intersection, recall that the bipartite matching problem can be formulated as the intersection of two partition matroids. We can thus reinterpret their algorithm and analysis in this setting. Furthermore, after this reinterpretation, it is not too hard to define an algorithm that works for the intersection of any two matroids. However, bipartite matching is a \emph{special} case of matroid intersection which captures a rich set of seemingly more complex problems. This added expressiveness causes the analysis and the standard framework for analyzing local ratio algorithms to fail. Specifically, we prove that a solution formed by running the greedy algorithm on $S$ in the reverse order (as done for the matching problem) fails to give any constant-factor approximation guarantee for the matroid intersection problem. To overcome this and to obtain our main result,  we make a connection to a concept called matroid kernels  (see \cite{Fleiner} for more details about kernels), which allows us to, in a more complex way, identify a subset of $S$ with an approximation guarantee of $2+\epsilon$.

 Finally, for the intersection of more than two matroids, the same approach in the analysis  does not work, because the notion of matroid kernel does not generalize to more than two matroids. However, we conjecture that the subset $S$ generated for the intersection of $k$ matroids still contains a $(k+\epsilon)$-approximation. Currently, the best approximation results are a $(k^2+\epsilon)$-approximation from \cite{Crouch} and  a $(2(k+\sqrt{k(k-1)})-1)$-approximation from \cite{Chak}. For $k=3$, the former is better, giving a $(9+\epsilon)$-approximation. For $k>3$, the latter is better, giving an $O(k)$-approximation.

 \paragraph{Generalization to submodular functions.} Very recently, Levin and Wajc~\cite{levin2020streaming}  obtained improved approximation ratios for matching and b-matching problems in the semi-streaming model with respect to submodular functions. 
 Specifically, they get a $(3+2\sqrt{2})$-approximation for monotone submodular b-matching, $(4+3\sqrt{2})$-approximation for non-monotone submodular matching, and a $(3+\epsilon)$-approximation for maximum weight (linear) b-matching. In our paper, we are able to extend our algorithm for weighted matroid intersection to work with submodular functions by combining our and their ideas. In fact, we are able to generalize all their results to the case of matroid intersection with better or equal\footnote{One can get rid of the $\delta$ factor if we assume that the function value is polynomially bounded by $|E|$, an assumption made by \cite{levin2020streaming}.} approximation ratios: we get  $(3+2\sqrt{2}+\delta)$-approximation for monotone submodular matroid intersection, $(4+3\sqrt{2}+\delta)$-approximation for non-monotone submodular matroid intersection and $(2+\epsilon)$-approximation for maximum weight (linear) matroid intersection.

  \paragraph{Outline.} In \cref{sec:prelim} we introduce basic matroid concepts and we formally define the weighted matroid intersection problem in the semi-streaming model. \cref{sec:mainalg} and \cref{sec:algmemefficient} are devoted to our main result, i.e., the semi-streaming algorithm for weighted matroid intersection with an approximation guarantee of $(2+\varepsilon)$. Specifically, in \cref{sec:mainalg} we adapt the algorithm of~\cite{Paz} without worrying about the memory requirements, show why the standard analysis fails, and then give our new analysis. We then make the obtained algorithm memory efficient in \cref{sec:algmemefficient}. Further in \cref{submodular_section}, we adapt our algorithm to work with submodular functions by using ideas from \cite{levin2020streaming}. Finally, in \cref{morethantwo}, we discuss the case of  more than two matroids.

\section{Preliminaries}
\label{sec:prelim}

\paragraph{Matroids.} We define and  give a brief overview of the basic concepts related to matroids that we use in this paper. For a more comprehensive treatment, we refer the reader to~\cite{schrijver2003combinatorial}. A \emph{matroid} is a tuple $M= (E,I)$ consisting of a finite ground set $E$ and a family  $I \subseteq 2^E$  of subsets of $E$ satisfying:
\begin{itemize}
    \item if $X\subseteq Y,Y\in I$, then $X\in I$; and
    \item if $X\in I,Y\in I$ and $|Y|>|X|$, then $\exists \ e\in Y\setminus X$ such that $X\cup \{e\}\in I$.
\end{itemize}
The elements in $I$ (that are subsets of $E$) are referred to as the \emph{independent sets} of the matroid and the set $E$ is referred to as the \emph{ground set}.  With a matroid $M = (E,I)$, we associate the \emph{rank function} $\rank_M : 2^E \rightarrow \mathbb{N}$ and the \emph{span function} $\spn_M: 2^E \rightarrow 2^E$ defined as follows for every $E' \subseteq E$,
\begin{align*}
    \rank_M(E') & = \max \{ |X| \mid X \subseteq E' \mbox{ and } X \in I\},  \\
    \spn_M(E') & = \{e \in E \mid \rank_M(E' \cup \{e\}) = \rank_M(E')\}.
\end{align*}
We simply write $\rank(\cdot)$ and $\spn(\cdot)$ when the matroid $M$ is clear from the context.
In words, the rank function equals the size of the largest independent set when restricted to $E'$ and the span function equals the elements in $E'$ and all elements that cannot be added to a maximum cardinality independent set of $E'$ while maintaining independence. 
The \emph{rank of the matroid} equals $\rank(E)$, i.e., the size of the largest independent set.

\paragraph{The weighted matroid intersection problem in the semi-streaming model.}
In the \emph{weighted matroid intersection problem}, we are given two matroids $M_1 = (E, I_1), M_2 = (E, I_2)$ on a common ground set $E$ and a non-negative weight function $w: E \rightarrow \mathbb{R}_{\geq 0}$ on the elements of the ground set. 
The goal is to find a subset $X \subseteq E$ that is independent in both matroids, i.e., $X\in I_1$ and $X\in I_2$, and whose weight $w(X) = \sum_{e\in X} w(e)$ is maximized. 

In seminal work~\cite{EDMONDS197939}, Edmonds gave a polynomial-time algorithm for solving the weighted matroid intersection problem to optimality in the classic model of computation when the whole input is available to the algorithm throughout the computation. In contrast, the problem becomes significantly harder and tight results are still eluding us in the semi-streaming model where the memory footprint of the algorithm  and its access pattern to the input are restricted.   
Specifically,  in the \emph{semi-streaming} model the ground set $E$ is revealed in a stream $e_1, e_2, \ldots, e_{|E|}$ and at time $i$ the algorithm gets access to $e_i$ and can perform computation based on $e_i$ and its current memory but without knowledge of future elements $e_{i+1}, \ldots, e_{|E|}$. The algorithm has independence-oracle access to the matroids $M_1$ and $M_2$ restricted to the elements stored in the memory, i.e., for a set of such elements, the algorithm can query  whether the set is independent in each matroid.. The goal is to design an algorithm such that (i) the memory usage is near-linear $O((r_1 + r_2) \polylog(r_1 + r_2))$ at any time, where $r_1$ and $r_2$ denote the ranks of the input matroids $M_1$ and $M_2$, respectively,  and (ii) at the end of the stream the algorithm should output a feasible solution $X\subseteq E$, i.e., a subset $X$ that satisfies $X \in I_1$ and $X \in I_2$, of large weight $w(X)$. We remark that the memory requirement $O((r_1 + r_2) \polylog(r_1 + r_2))$ is natural as $r_1 + r_2 = |V|$ when formulating a bipartite matching problem as the intersection of two matroids\footnote{The considered problem can also be formulated as the problem of finding an independent set in one matroid the matroids, say $M_1$, and maximizing a submodular function which would be the (weighted) rank function of $M_2$. For that problem,~\cite{DBLP:journals/corr/abs-2002-05477} recently gave a streaming algorithm with an approximation guarantee of $(2+\epsilon)$. However, the space requirement of their algorithm is exponential the rank of $M_1$ (which would correspond to be exponential in $|V|$ in the matching case) and thus it does not provide a meaningful algorithm for our setting.}.

The difficulty in designing a good semi-streaming algorithm is that the memory requirement is much smaller than the size of the ground set $E$ and thus the algorithm must intuitively discard many of the elements without knowledge of the future and without significantly deteriorating the weight of the final solution $X$. The quality of the algorithm is measured in terms of its approximation guarantee: an algorithm is said to have an \emph{approximation guarantee} of $\alpha$ if it is guaranteed to output a solution $X$, no matter the input and the order of the stream, such that $w(X) \geq \OPT/\alpha$ where $\OPT$ denotes the weight of an optimal solution to the instance. As aforementioned, our main result in this paper is a semi-streaming algorithm with an approximation guarantee of $2+\varepsilon$, for every $\varepsilon>0$, improving upon the previous best guarantee of $4+\epsilon$~\cite{Crouch}.

\section{The Local Ratio Technique for Weighted Matroid Intersection}
\label{sec:mainalg}
In this section, we first present the local ratio algorithm for the weighted matching problem that forms the basis of the semi-streaming algorithm in~\cite{Paz}.  
We then adapt it to the weighted matroid intersection problem. 
While the algorithm is fairly natural to adapt to this setting, we give an example in \cref{sec:counterexample} that shows that the same techniques as used for analyzing the algorithm for matchings does not work for matroid intersection. 
Instead, our analysis, which is presented in \cref{sec:mlr2analysis}, deviates from the standard framework for analyzing local ratio algorithms and it heavily relies on a structural property of matroid intersection known as kernels.   We remark that the algorithms considered in this section do not have a small memory footprint. We deal with this in \cref{sec:algmemefficient} to obtain our semi-streaming algorithm.

\subsection{Local-Ratio Technique for Weighted Matching}
The local ratio algorithm for the weighted matching problem is given in \cref{lr}. The algorithm maintains vertex potentials $w(u)$ for every vertex $u$, a set $S$ of selected edges, and an auxiliary weight function $g: S \rightarrow \mathbb{R}_{\geq 0}$ of the selected edges. Initially the vertex potentials are set to $0$ and the set $S$ is empty. When an edge $e=\{u,v\}$ arrives, the algorithm computes how much it gains compared to the previous edges, by taking its weight minus the weight/potential of its endpoints ($g(e)=w(e)-w(u)-w(v)$). If the gain is positive, then we add the edge to $S$, and add the gain to the weight of the endpoints, that is, we set $w(u)=w(u)+g(e)$ and $w(v)=w(v)+g(e)$.
\begin{figure}[t]
    \centering
        \begin{tikzpicture}
        \draw[rounded corners= 10pt] (-1,-1.4) rectangle (12.8,2.1);
        
        \begin{scope}[xshift=0cm]
        \node[sgvertex] (a) at (0,1.5) {$0$};
        \node at (-0.4, 1.5) {$1$};
        \node[sgvertex] (b) at (1.5,1.5) {$0$};
        \node at (1.9, 1.5) {$1$};
        \node[sgvertex] (c) at (0,0) {$0$};
        \node[sgvertex] (d) at (1.5,0) {$0$};
        
        \draw (a) edge node[above] {$1$} (b);
        \node at (0.75, -0.9) {Time 1};
        \end{scope}
        \begin{scope}[xshift=3.5cm]
0       \draw[dashed, draw=gray, thick] (-1.00, -1.2)  -- (-1.00, 1.9);
        \node[sgvertex] (a) at (0,1.5) {$1$};
        \node[sgvertex] (b) at (1.5,1.5) {$1$};
        \node at (1.9, 1.5) {$2$};
        \node[sgvertex] (c) at (0,0) {$0$};
        \node at (-0.4, 0) {$1$};
        \node[sgvertex] (d) at (1.5,0) {$0$};

        \draw (a) edge (b);
        \draw (b) edge[thick] node[above] {$2$} (c);
        \node at (0.75, -0.9) {Time 2};
        \end{scope}
        \begin{scope}[xshift=7cm]
        \draw[dashed, draw=gray, thick] (-1.00, -1.2)  -- (-1.00, 1.9);
        \node[sgvertex] (a) at (0,1.5) {$1$};
        \node[sgvertex] (b) at (1.5,1.5) {$2$};
        \node[sgvertex] (c) at (0,0) {$1$};
        \node at (-0.4, 0) {$2$};
        \node[sgvertex] (d) at (1.5,0) {$0$};
        \node at (1.9, 0) {$1$};

        \draw (a) edge (b);
        \draw (b) edge (c);
        
        \draw (c) edge[thick] node[below] {$2$} (d);

        \node at (0.75, -0.9) {Time 3};
        \end{scope}
        \begin{scope}[xshift=10.5cm]
        \draw[dashed, draw=gray, thick] (-1.00, -1.2)  -- (-1.00, 1.9);
        \node[sgvertex] (a) at (0,1.5) {$1$};
        \node[sgvertex] (b) at (1.5,1.5) {$2$};
        \node[sgvertex] (c) at (0,0) {$2$};
        \node[sgvertex] (d) at (1.5,0) {$1$};
        
        \draw (a) edge (b);
        \draw (b) edge (c);
        \draw (c) edge (d);
        \draw (a) edge[thick, dashed] node[above] {$2$} (d);
        \node at (0.75, -0.9) {Time 4};
        \end{scope}

        \begin{scope}[yshift=-3.7cm]
        
        \draw[rounded corners= 10pt] (-1,-1.4) rectangle (12.8,2.1);
        
        \begin{scope}[xshift=0cm]
        \node[sgvertex] (a) at (0,1.5) {};
        \node[sgvertex] (b) at (1.5,1.5) {};
        \node[sgvertex] (c) at (0,0) {};
        \node[sgvertex] (d) at (1.5,0) {};
        
        \draw (a) edge  (b);
        \node at (0.75, -0.9) {Time 1};
        \end{scope}
        \begin{scope}[xshift=3.5cm]
       \draw[dashed, draw=gray, thick] (-1.00, -1.2)  -- (-1.00, 1.9);
        \node[sgvertex] (a) at (0,1.5) {};
        \node[sgvertex] (b) at (1.5,1.5) {};
        \node[sgvertex] (c) at (0,0) {};
        \node[sgvertex] (d) at (1.5,0) {};

        \draw (a) edge[] node[near end, fill=white, inner sep =1] {\small $1$} node[near start, fill=white, inner sep =1] {\small $1$} (b);
        \draw (b) edge[thick]  (c);
        \node at (0.75, -0.9) {Time 2};
        \end{scope}
        \begin{scope}[xshift=7cm]
        \draw[dashed, draw=gray, thick] (-1.00, -1.2)  -- (-1.00, 1.9);
        \node[sgvertex] (a) at (0,1.5) {};
        \node[sgvertex] (b) at (1.5,1.5) {};
        \node[sgvertex] (c) at (0,0) {};
        \node[sgvertex] (d) at (1.5,0) {};

        \draw (a) edge[] node[near end, fill=white, inner sep =1] {\small $1$} node[near start, fill=white, inner sep =1] {\small $1$}(b);
        \draw (b) edge[] node[near end, fill=white, inner sep =1] {\small $1$} node[near start, fill=white, inner sep =1] {\small $2$}(c);
        
        \draw (c) edge[thick]  (d);

        \node at (0.75, -0.9) {Time 3};
        \end{scope}
        \begin{scope}[xshift=10.5cm]
        \draw[dashed, draw=gray, thick] (-1.00, -1.2)  -- (-1.00, 1.9);
        \node[sgvertex] (a) at (0,1.5) {};
        \node[sgvertex] (b) at (1.5,1.5) {};
        \node[sgvertex] (c) at (0,0) {};
        \node[sgvertex] (d) at (1.5,0) {};
        
        \draw (a) edge[] node[near end, fill=white, inner sep =1] {\small $1$} node[near start, fill=white, inner sep =1] {\small $1$}(b);
        \draw (b) edge[] node[near end, fill=white, inner sep =1] {\small $1$} node[near start, fill=white, inner sep =1] {\small $2$}(c);
        \draw (c) edge[] node[near end, fill=white, inner sep =1] {\small $1$} node[near start, fill=white, inner sep =1] {\small $2$} (d);
        \draw (a) edge[thick, dashed]  (d);
        \node at (0.75, -0.9) {Time 4};
        \end{scope}
        \end{scope}
    \end{tikzpicture}
    \caption{The top part shows an example execution of the local ratio technique for weighted matchings (\cref{lr}). The bottom part shows how to adapt this (bipartite) example to the language of weighted matroid intersection (\cref{mlr2}).}
    \label{fig:simpleLR}
\end{figure}
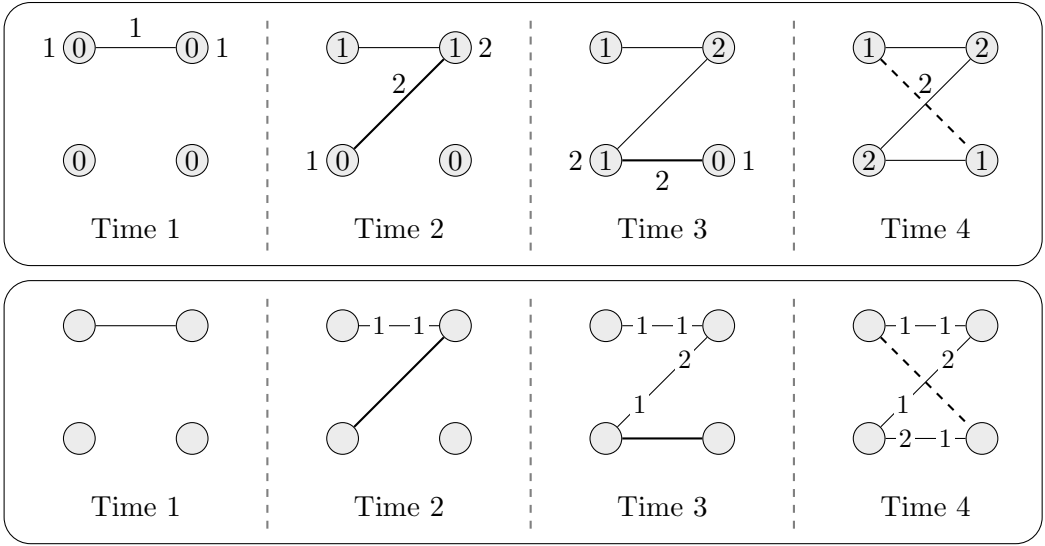
\begin{algorithm}[H] 
\caption{Local ratio algorithm for weighted matching} 
\label{lr} 
\begin{algorithmic}[1] 
    \REQUIRE A stream of the edges of a graph $G=(V,E)$ with a weight function $w:E\rightarrow \mathbb{R}_{\geq 0}$. 
    \ENSURE A matching $M$.
    \STATE $S \leftarrow \emptyset$
    \STATE $\forall u\in V, w(u)\leftarrow 0$
    \FOR{edge $e=(u,v)$ in the stream}
    \IF{$w(u)+w(v)<w(e)$}
    \STATE $g(e)\leftarrow w(e)-w(u)-w(v)$
    \STATE $w(u)\leftarrow w(u)+g(e)$
    \STATE $w(v)\leftarrow w(v)+g(e)$
    \STATE $S \leftarrow S \cup \{e\}$
    \ENDIF
    \ENDFOR
    \RETURN a maximum weight matching $M$ among the edges stored on the stack $S$
\end{algorithmic}
\end{algorithm}

For a better intuition of the algorithm, consider the example depicted on the top of Figure~\ref{fig:simpleLR}. The stream consists of four edges $e_1, e_2, e_3, e_4$ with weights $w(e_1) =1$ and $w(e_2) = w(e_3) = w(e_4) = 2$. At each time step $i$, we depict the arriving edge $e_i$ in thick along with its weight; the vertex potentials before the algorithm considers this edge is written on the vertices, and the updated vertex potentials (if any) after considering $e_i$ are depicted next to the incident vertices. The edges that are added to $S$ are solid  and those that are not added to $S$ are dashed.  

At the arrival of the first edge of weight $w(e_1) = 1$, both incident vertices have potential $0$ and so the algorithm adds this edge to $S$ and increases the incident vertex potentials with the gain $g(e_1) = 1$. For the second edge of weight $w(e_2) = 2$, the sum of incident vertex potentials is $1$ and so the gain of $e_2$ is $g(e_2) = 2 - 1$, which in turn causes the algorithm to add this edge to $S$ and to increase the incident vertex potentials by $1$. The third time step is similar to the second. At the last time step, edge $e_4$ arrives of weight $w(e_4) = 2$. As the incident vertex potentials sum up to $2$ the gain of $e_4$ is not strictly positive and so this edge is \emph{not} added to $S$ and no vertex potentials are updated.   Finally, the algorithm returns the maximum weight matching in $S$ which in this case consists of edges $\{e_1, e_3\}$  and has weight $3$. Note that the optimal matching of this instance had weight $4$ and we thus found a $4/3$-approximate solution. 

In general, the algorithm has an approximation guarantee of $2$. This is proved using a common framework to analyze algorithms based on the local ratio technique: We ignore the weights and greedily  construct a matching $M$ by inspecting the edges in $S$ in reverse order, i.e., we first consider the edges that were added last. An easy proof (see e.g.~\cite{Wajc}) then shows that the  matching $M$  constructed in this way has weight at least half the optimum weight. 

In the next section, we adapt the above described algorithm to the context of matroid intersections. We also give an example that the above framework for the analysis fails to give any constant-factor approximation guarantee. Our alternative (tight) analysis of this algorithm is then given in \cref{sec:mlr2analysis}.
\subsection{Adaptation to Weighted Matroid Intersection}
\begin{algorithm}[t] 
\caption{Local ratio for matroid intersection} 
\label{mlr2} 
\begin{algorithmic} 
    \REQUIRE A stream of the elements of the common ground set  of matroids $M_1 = (E, I_1), M_2 = (E, I_2)$.
    \ENSURE A set $X\subseteq E$ that is independent in both matroids.
    \STATE $S \leftarrow \emptyset$
    \FOR{element $e$ in the stream}
    \STATE calculate $w^*_i(e) =\max\left( \{0\} \cup \lbrace \theta: e \in \spn_{M_i}\left(\lbrace f\in S\ |\ w_i(f)\geq \theta\rbrace\right)\rbrace \right)$ for $i\in \{1,2\}$.
    \IF{$w(e)>w_1^*(e)+w^*_2(e)$}
    \STATE $g(e)\leftarrow w(e)-w_1^*(e)-w^*_2(e)$
    \STATE $w_1(e)\leftarrow w^*_1(e)+g(e)$
    \STATE $w_2(e)\leftarrow w^*_2(e)+g(e)$
    \STATE $S \leftarrow S \cup \{e\}$
    \ENDIF
    \ENDFOR
    \RETURN a maximum weight set $T\subseteq S$ that is independent in $M_1$ and $M_2$
    
\end{algorithmic}
\end{algorithm}
%
When adapting \cref{lr} to matroid intersection to obtain \cref{mlr2}, the first problem we encounter is the fact that matroids do not have a notion of vertices, so we cannot keep a weight/potential for each vertex. To describe how we overcome this issue, it is helpful to consider the case of bipartite matching and in particular the example depicted in Figure~\ref{fig:simpleLR}. It is well known that the weighted matching problem on a bipartite graph with edge set $E$ and bipartition $V_1, V_2$ can be modelled as a weighted matroid intersection problem on matroids $M_1 = (E, I_1)$ and $M_2 = (E, I_2)$ where  for $i\in \{1,2\}$
\begin{align*}
    I_i  = \{E' \subseteq E \mid \mbox{each vertex $v\in V_i$ is incident to at most one vertex in $E'$}\}\,.
\end{align*}

Instead of keeping a weight for each vertex, we will maintain two weight functions $w_1$ and $w_2$, one for each matroid. These weight functions will be set so that the following holds in the special case of bipartite matching:  on the arrival of a new edge $e$, let $T_i \subseteq S$ be an independent set in $I_i$ of selected edges  that maximizes the weight function $w_i$. Then we have that
\begin{align}
    \min_{f\in T_i: T_i \setminus \{f\} \cup \{e\}\in I_i} w_i(f) \qquad \mbox{if $T_i \cup \{e\} \not \in I_i$ and $0$ otherwise}
    \label{eq:adapt1}
\end{align}
equals the vertex potential of the incident vertex $V_i$ when running \cref{lr}. It is well-known (e.g. by the optimality of the greedy algorithm for matroids) that the cheapest element $f$ to remove from $T_i$ to make $T_i \setminus \{f\} \cup \{e\}$ an independent set equals the largest weight $\theta$ so that the elements of weight at least $\theta$ spans $e$. We thus have that~\eqref{eq:adapt1}  equals
\begin{align*}
     \max\left( \{0\} \cup \lbrace \theta: e \in \spn_{M_i}\left(\lbrace f\in S\ |\ w_i(f)\geq \theta\rbrace\right)\rbrace \right)
\end{align*}
and it follows that the quantities $w_1^*(e)$ and $w_2^*(e)$ in \cref{mlr2} equal the incident vertex potentials in  $V_1$ and $V_2$ of \cref{lr} in the special case of bipartite matching. To see this, let us return to our example in  Figure~\ref{fig:simpleLR} and let $V_1$ be the two vertices on the left and $V_2$ be the two vertices on the right. In the bottom part of the figure, the weight functions $w_1$ and $w_2$ are depicted (at the corresponding side of the edge) after the arrival of each edge. At time step $1$, $e_1$ does not need to replace any elements in any of the matroids and so $w^*_1(e_1) = w^*_1(e_2) = 0$. We therefore have that its gain is $g(e_1) = 1$ and the algorithm  sets $w_1(e_1) = w_2(e_1) = 1$.  At time $2$, edge $e_2$ of weight $2$ arrives. It is not spanned in the first matroid  whereas it is spanned by  edge $e_1$ of weight $1$ in the second matroid. It follows that $w_1^*(e_2) = 0$ and $w_2^*(e_2) = w_2(e_1) = 1$ and so $e_2$ has positive gain $g(e_2) = 1$ and it sets $w_1(e_2) = 1$ and $w_2(e_2) = w_2(e_1) + 1 = 2$. The third time step is similar to the second. At the last time step, $e_4$ of weight $2$ arrives. However, since it is spanned by  $e_1$ with $w_1(e_1) = 1$ in the first matroid and by $e_3$ with $w_2(e_3) = 1$  in the second matroid, its gain is $0$ and it is thus not added to the set $S$. Note that throughout this example, and in general for bipartite graphs, \cref{mlr2} is identical to \cref{lr}. One may therefore expect that the analysis of \cref{lr} also generalizes to \cref{mlr2}. We explain next that this is not the case for general matroids.

\subsubsection{Counter Example to Same Approach in Analysis}
\label{sec:counterexample}
We give a simple example showing that the greedy selection (as done in the analysis for \cref{lr} for weighted matching) does not work for matroid intersection. Still, it turns out that the set $S$ generated by \cref{mlr2} always contains a 2-approximation  but the selection process is more involved.
\begin{lemma}
There exist two matroids $M_1=(E,I_1)$ and $M_2=(E,I_2)$ on a common ground set $E$ and a weight function $w: E \rightarrow \mathbb{R}_{\geq 0}$ such that a greedy algorithm that considers the elements in the set $S$ in the reverse order of when they were added by \cref{mlr2}   does not provide any constant-factor approximation.
\end{lemma}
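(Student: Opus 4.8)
The plan is to construct an explicit instance where the two matroids force a tension between what the local-ratio selection stores in $S$ and what a reverse-order greedy can extract from it. The natural building block is a family of matroids on a ground set partitioned into ``heavy'' and ``light'' groups, arranged so that: (i) the weights $w$ and the update rule of \cref{mlr2} cause a large collection of light elements to be added to $S$ \emph{after} a single (or few) heavy element(s), so that the reverse-order greedy inspects the light elements first; and (ii) in matroid $M_1$ the light elements are essentially free to take together, but in $M_2$ any maximal independent set of light elements already spans the heavy element(s), so once the greedy has committed to the light elements it can no longer add the heavy one. Since $w^*_i(e)$ is set from the \emph{current} contents of $S$ and the heavy element arrives first with empty $S$, the heavy element is always stored in $S$ along with all the light elements; thus the loss is entirely due to the greedy tie-breaking order, not to $S$ being impoverished. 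A convenient concrete choice: let $M_1$ be a partition matroid and $M_2$ be (say) a uniform/transversal matroid of rank $2$ on the light block together with the heavy element, scaling the number of light elements to $\Theta(n)$ so the resulting ratio is unbounded.

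The key steps, in order, would be: first, fix the ground set $E = \{h\} \cup \{\ell_1,\dots,\ell_n\}$ with $w(h)$ large (say $w(h)=2$) and each $w(\ell_j)$ slightly less (say $w(\ell_j) = 1$), and specify $I_1, I_2$ so that $\{h\}\cup\{\ell_j\}$ is dependent in $M_2$ for every $j$ while $\{\ell_1,\dots,\ell_n\}$ can be split into many pairwise-independent singletons in the relevant matroid. Second, trace the execution of \cref{mlr2} with the stream ordered $h, \ell_1, \ell_2, \dots, \ell_n$: show $w^*_1(h)=w^*_2(h)=0$ so $h$ enters $S$ with $w_1(h)=w_2(h)=2$, then show each $\ell_j$ still has positive gain (its $w^*$-values are $0$ or small because the weight of elements spanning it in $S$ is small) so every $\ell_j$ enters $S$. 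Third, observe that the reverse-order greedy inspects $\ell_n, \ell_{n-1}, \dots, \ell_1$ before $h$; by the construction of the matroids it will accept a maximal independent set of $\ell$'s whose weight is $O(1)$ (bounded by the rank contribution of the light block), and will then be unable to add $h$ because the chosen $\ell$'s already span $h$ in $M_2$. Fourth, exhibit the optimum solution $\{h\}$ (or $\{h\}$ together with whatever light elements remain compatible) of weight $\Omega(w(h))$ bounded below by a constant, and conclude the ratio between $\OPT$ and the greedy output is $\Omega(1)$ in the worst arrangement — in fact unbounded if we instead make the light block's rank $1$ in both matroids so the greedy extracts weight exactly $1$ while $\OPT = 2$, or scale weights to make the gap a growing function.

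The main obstacle I anticipate is \emph{simultaneously} satisfying two competing requirements: the matroid structure must let \cref{mlr2} deposit many light elements into $S$ (which needs the light elements to be ``independent enough'' in both matroids at insertion time, so their $w^*$-values stay below their weight), yet the reverse-order greedy must be trapped into a bad maximal independent set (which needs the light elements to ``block'' $h$ and to have low total weight once taken maximally). Partition matroids alone make the greedy too smart, so the interesting part is choosing $M_2$ — a transversal or a truncated/uniform matroid on an overlapping block — to create the asymmetry. A secondary subtlety is making sure $w^*_i(\ell_j)$ is computed correctly: one must check that at the moment $\ell_j$ arrives, the set $\{f \in S : w_i(f) \ge \theta\}$ does \emph{not} span $\ell_j$ for any $\theta$ exceeding $\ell_j$'s weight, which follows if the heavy element's $w_i$-value, while large, does not by itself span $\ell_j$ in $M_i$ — i.e. $\{h\}$ alone must be independent-with-$\ell_j$ in at least the matroid where $h$ is heavy-weighted, pushing toward the design where $h$ and $\ell_j$ conflict only in $M_2$ and there $w_2(h)$ is set low relative to $\ell_j$, or where the conflict only activates once \emph{several} $\ell$'s are present. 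Balancing these is exactly where the ``more involved selection process'' alluded to in the lemma statement becomes necessary, and the counterexample is built to make that failure mode transparent.
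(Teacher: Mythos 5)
Your high-level diagnosis of the failure mode is the right one (reverse-order greedy gets trapped by low-weight elements that arrive late and saturate one matroid before the heavy elements are reached), and you correctly flag the central tension. But the concrete construction you sketch does not resolve that tension, and the step where you "show each $\ell_j$ still has positive gain" would fail. With $h$ arriving first, \cref{mlr2} sets $w_2(h)=w(h)=2$; in a rank-$2$ uniform matroid $M_2$, after $h$ and one light element $\ell_1$ (with $w_2(\ell_1)=1$) are in $S$, every subsequent $\ell_j$ satisfies $w_2^*(\ell_j)=1=w(\ell_j)$, so its gain is non-positive and it is \emph{rejected}. You cannot pile many unit-weight light elements into $S$ on top of a heavy element whose $w_2$-weight equals its true weight. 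Moreover, the fallback variants you mention only yield a ratio of $2$ (a constant), and "scale weights to make the gap a growing function" is not substantiated — uniform scaling leaves the ratio unchanged, so as written you have not exhibited a family with unbounded ratio, which is what the lemma requires.

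The missing idea, which the paper's example supplies, is to use the \emph{first} matroid to absorb almost all of the weight of a heavy element so that it enters $S$ with a tiny $w_2$-weight. Concretely: take $E=\{a,b,c,d\}$ with $w(a)=1$, $w(b)=1+\epsilon$, $w(c)=2\epsilon$, $w(d)=3\epsilon$; let $M_1$ forbid $\{a,b\}$ together and let $M_2$ be the $2$-uniform matroid. Element $b$ is spanned by $a$ in $M_1$, so $w_1^*(b)=1$ and $g(b)=\epsilon$, whence $w_2(b)=\epsilon$ — heavy in true weight but nearly invisible to $M_2$'s thresholds. The light elements $c,d$ then have $w_2^*$ of only $\epsilon$ and $2\epsilon$ respectively, so they are accepted, and the reverse greedy outputs $\{c,d\}$ of weight $5\epsilon$ against an optimum of weight at least $1$, giving ratio $\Omega(1/\epsilon)$. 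Your proposal gestures at needing "$w_2(h)$ set low relative to $\ell_j$" but offers no mechanism for achieving it; that cross-matroid weight-splitting is precisely the ingredient your construction lacks.
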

\begin{proof}
    The example consists of the ground set $E=\lbrace a,b,c,d\rbrace$ with weights $w(a)=1,w(b)=1+\epsilon,w(c)=2\epsilon,w(d)=3\epsilon$ for a small $\epsilon>0$ (the approximation guarantee will be at least $\Omega(1/\varepsilon)$). The matroids $M_1 = (E, I_1)$ and $M_2 = (E, I_2)$ are defined by 
    \begin{itemize}
        \item a subset of $E$ is in $I_1$ if and only if it does not contain $\lbrace a,b\rbrace$; and
        \item a subset  of $E$ is in $I_2$ if and only if it contains at most two elements.
    \end{itemize}

To see that $M_1$ and $M_2$ are matroids, note that $M_1$ is a partition matroid with partitions $\lbrace a,b\rbrace,\lbrace c\rbrace,\lbrace d\rbrace$, and $M_2$ is the 2-uniform matroid (alternatively, one can easily check that $M_1$ and $M_2$ satisfy the definition of a matroid).

Now consider the execution of \cref{mlr2} when given the elements of $E$ in the order $a,b,c,d$:
\begin{itemize}
    \item Element $a$ has weight $1$, and $\lbrace a\rbrace$ is independent both in $M_1$ and $M_2$, so we set $w_1(a)=w_2(a)=g(a)=1$ and $a$ is added to $S$.
    \item Element $b$ is spanned by $a$ in $M_1$ and not spanned by any element in $M_2$. So we get $g(b)=w(b)-w_1^*(b)-w_2^*(b)=1+\epsilon-1-0=\epsilon$. As $\epsilon>0$, we add $b$ to $S$, and set $w_1(b)=w_1(a)+\epsilon = 1+\epsilon$ and $w_2(b)=\epsilon$.
    \item Element $c$ is not spanned by any element in $M_1$ but is spanned by $\{a,b\}$ in $M_2$. As $b$ has the smallest $w_2$ weight, $w_2^*(c) = w_2(b) = \epsilon$. So we have $g(c)=2\epsilon-w_1^*(c) - w_2^*(c) = 2\epsilon - 0 -\epsilon=\epsilon>0$, and we  set $w_1(c) = \epsilon$ and $w_2(c) = 2\epsilon$ and add $c$ to $S$. 
    \item Element $d$ is similar to $c$. We have $g(d)=3\epsilon-0-2\epsilon=\epsilon>0$ and so we set $w_1(d) = \epsilon$ and $w_2(d) = 3\epsilon$ and add $d$ to $S$. 
\end{itemize}
 As the algorithm selected all the elements, we have $S=E$. It follows that  the greedy algorithm on $S$ (in the reverse order of when elements were added) will select $d$ and $c$, after which the set is a maximal independent set in $M_2$. This gives a weight of $5\epsilon$, even though $a$ and $b$ both have weight at least 1, which shows that this algorithm does not guarantee any constant factor approximation.
\end{proof}

\subsection{Analysis of \cref{mlr2}}
\label{sec:mlr2analysis}

We prove that \cref{mlr2} has an approximation guarantee of $2$. 

\begin{theorem}
\label{k2}
Let $S$ be the subset generated by \cref{mlr2} on a stream $E$ of elements, matroids $M_1 = (E, I_1), M_2 = (E, I_2)$ and weight function $w: E \rightarrow \mathbb{R}_{\geq 0}$. Then there exists a subset $T\subseteq S$ independent in $M_1$ and in $M_2$ whose weight $w(T)$ is at least $w(S^*)/2$, where $S^*$ denotes an optimal solution to the weighted matroid intersection problem.
\end{theorem}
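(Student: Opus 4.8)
The plan is to construct the desired $T$ as a \emph{matroid--intersection kernel} of $S$. Order the elements of $S$ once for each matroid: for $i\in\{1,2\}$, rank them in decreasing order of the auxiliary weight $w_i$ produced by \cref{mlr2}, breaking ties by a fixed rule, say reverse order of arrival in the stream. By Fleiner's kernel theorem~\cite{Fleiner} there is a set $T\subseteq S$ that is independent in both $M_1$ and $M_2$ and such that every $e\in S\setminus T$ is spanned, in $M_1$ or in $M_2$, by the set of elements of $T$ that precede $e$ in the corresponding order. Since $T$ is itself a common independent subset of $S$, it suffices to prove $w(T)\ge w(S^*)/2$; the maximum-weight common independent subset of $S$ actually returned by \cref{mlr2} is then at least as good. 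This $T$ is the object that, in the analysis of \cref{lr}, is played by the matching obtained by running the greedy algorithm on $S$ in reverse order; \cref{sec:counterexample} is precisely the statement that for matroid intersection one cannot get away with that simpler choice, so introducing kernels here is exactly what closes that gap.

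Next I would bound $\OPT=w(S^*)$ from above using only the selection rule. When an element $e$ is processed, either it is discarded, in which case $w(e)\le w_1^*(e)+w_2^*(e)$, or it is added to $S$, in which case $w(e)=w_1^*(e)+w_2^*(e)+g(e)$; moreover the sets $\{f\in S: w_i(f)\ge\theta\}$ only grow over the run, so by monotonicity of $\spn_{M_i}$ all the quantities $w_i^*(e)$ may be evaluated with respect to the final set $S$. Summing over the optimal common independent set gives
\[
  w(S^*)\ \le\ \sum_{e\in S^*}\bigl(w_1^*(e)+w_2^*(e)\bigr)\ +\ \sum_{e\in S^*\cap S} g(e).
\]
On the $T$ side, the update rule gives $w_i(f)=w_i^*(f)+g(f)$ and $w(f)=w_1^*(f)+w_2^*(f)+g(f)$ for every $f\in S$, so $w(T)=\sum_{f\in T}w(f)$ and $w_1(T)+w_2(T)=w(T)+\sum_{f\in T}g(f)\le 2\,w(T)$ (the last step because $g(f)\le w(f)$ for each $f\in T$, as $w_1^*,w_2^*\ge 0$). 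Thus it is enough to charge the right-hand side of the displayed inequality against $w_1(T)+w_2(T)$: route, for each $e\in S^*$, the mass $w_i^*(e)$ into matroid $M_i$ along a matroid exchange so that the total mass arriving at any fixed $f\in T$ is at most $w_i(f)$, and dispose of the terms $g(e)$ for $e\in S^*\cap S$ by using that such an $e$, if not already in $T$, is dominated by preceding elements of $T$ in one of the two matroids.

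The genuine obstacle is this last charging step, and it is exactly where the kernel structure rather than an arbitrary maximal common independent subset of $S$ is essential. The difficulty is twofold: the kernel only guarantees, for each $e\in S\setminus T$, domination by order-heavier elements of $T$ in \emph{one} of the two matroids, so the two matroids cannot be handled independently; and $S^*$ need not be contained in $S$, so for $e\in S^*\setminus S$ the contribution $w_i^*(e)$ is merely the $w_i$-threshold at which $e$ became spanned by $S$ in $M_i$, and one must certify that these thresholds can be spread over $T$ without overloading any element. When $T$ happens to miss a heavy element of $M_i$ (so that $w_i(T)$ is ``small''), the kernel property forces the optimum elements that would have wanted to be charged there to be dominated instead in $M_{3-i}$, and the accounting has to track this swap --- together with the bookkeeping of the $g$-terms, since the orders defining the kernel are $w_1,w_2$ whereas the objective being compared is $w=w_1+w_2-g$ on $S$. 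Once a clean single-exchange lemma is isolated and shown to respect these constraints, applying it in $M_1$ and in $M_2$ and adding the two bounds yields $w(S^*)\le w_1(T)+w_2(T)\le 2\,w(T)$; and we expect the factor $2$ to be unavoidable for this algorithm, for the same reason the counterexample in \cref{sec:counterexample} is tight --- a single optimum element may genuinely have to be paid for out of two distinct elements of $T$, one per matroid.
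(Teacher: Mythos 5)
You have correctly identified the kernel of Fleiner as the right replacement for reverse-order greedy, and your orders $<_i$ (decreasing $w_i$, ties by reverse arrival) match the paper's. But the proof is not complete: the entire weight-transfer step is deferred to a ``clean single-exchange lemma'' that you do not state or prove, and the route you sketch for it cannot work as described. You propose to charge $w(S^*)\le w_1(T)+w_2(T)$ by routing, for each $e\in S^*$ and each $i$, the mass $w_i^*(e)$ onto elements of $T$ in matroid $M_i$ so that no $f\in T$ receives more than $w_i(f)$. That would require $w_i(S^*)\le w_i(T)$ for each $i$ separately, which is false: in the very instance of \cref{sec:counterexample}, the kernel is $T=\{a,d\}$ with $w_1(T)=1+\epsilon$, while $S^*=\{b,d\}$ has $w_1(S^*)=1+2\epsilon$. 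The kernel $T$ is $w_i$-optimal only on the dominated set $D_{M_i}(T)$, not on all of $S$, and certainly not on $E$, so it cannot absorb the optimum's $w_i$-mass matroid by matroid. You flag this difficulty yourself (``the accounting has to track this swap''), but flagging it is not resolving it, and resolving it is the whole content of the theorem.

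The missing idea in the paper is to decouple the two comparisons through the scalar $g(S)$ rather than comparing $S^*$ to the kernel directly. First (\cref{goodsubset}), for each $i$ the $w_i$-maximum independent set $T_i$ of $M_i$ \emph{over the whole ground set} can be taken inside $S$ and satisfies the exact identity $w_i(T_i)=g(S)$; this is proved by writing $w_i(T_i)=\int_0^\infty \rank(\{e: w_i(e)\ge\theta\})\,d\theta$ and telescoping over the insertion order. Combined with the pointwise invariant $w_1(e)+w_2(e)\ge w(e)$ (which you do have), this gives $w(S^*)\le w_1(S^*)+w_2(S^*)\le w_1(T_1)+w_2(T_2)=2g(S)$ with no reference to the kernel. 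Second (\cref{lemma:atleastgain}), the kernel $T$ is the greedy (hence $w_i$-optimal) independent set on $S_i:=D_{M_i}(T)$, so by \cref{goodsubset} applied to the subset $S_i$ one gets $w_i(T)\ge g(S_i)$; then $w(T)=w_1(T)+w_2(T)-g(T)\ge g(S_1)+g(S_2)-g(T)\ge g(S)$, using only that $S_1\cup S_2=S$ and $S_1\cap S_2\supseteq T$. Chaining the two steps gives $w(T)\ge g(S)\ge w(S^*)/2$. Your invariants and your choice of $T$ are right, but without the identity $w_i(T_i)=g(S)$ and the subset form of \cref{goodsubset} the argument does not close.
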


Throughout the analysis we fix the input matroids $M_1=(E, I_1), M_2=(E, I_2)$, the weight function $w: R \rightarrow \mathbb{R}_{\geq 0}$, and the order of the elements in the stream. While \cref{mlr2} only defines the weight functions $w_1$ and $w_2$ for the elements added to the set $S$, we extend them  in the analysis  by, for $i\in \{1,2\}$, letting $w_i(e) = w_i^*(e)$ for the elements $e$ not added to $S$. 

We now prove \cref{k2} by showing that $g(S) \geq w(S^*)/2$ (\cref{gee}) and that there is a solution $T\subseteq S$ such that $w(T) \geq g(S)$ (\cref{lemma:atleastgain}). In the proof of both these lemmas, we use the following properties of the computed set $S$.
\begin{lemma}\label{goodsubset}
    Let $S$ be the set generated by \cref{mlr2} and  $S'\subseteq S$ any subset. Consider one of the matroids $M_i$ with $i\in \{1,2\}$. There exists a subset $T'\subseteq S'$ that is independent in $M_i$, i.e.,  $T' \in I_i$, and $w_i(T') \geq g(S')$. Furthermore, the maximum weight independent set in $M_i$ over the whole ground set $E$ can be selected  to be a subset of $S$, i.e. $T_i \subseteq S$, and it satisfies $w_i(T_i)=g(S)$.
\end{lemma}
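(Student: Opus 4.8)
The plan is to prove \cref{goodsubset} by exploiting the precise way \cref{mlr2} defines the quantities $w_i^*(e)$ via the span function, which is exactly the structure that makes the greedy algorithm on matroids optimal. Fix a matroid $M_i$ and recall we have extended $w_i$ to all of $E$ by $w_i(e)=w_i^*(e)$ for $e\notin S$. The key observation is that for \emph{every} element $e$ (whether or not $e\in S$), the value $w_i(e)$ records a threshold with the property that $e\in\spn_{M_i}(\{f\in S: w_i(f)\ge w_i(e)\})$ only in the degenerate way — more usefully, for $e\in S$ we have $w_i(e)>w_i^*(e)$ is false in general but $w_i(e)\ge w_i^*(e)$, and $w_i^*(e)$ is precisely the largest $\theta$ such that $e$ is spanned by the strictly-heavier (in $w_i$) elements already in $S$. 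So the right way to think about it: run the standard streaming/greedy matroid algorithm (the one sorting by $w_i$-weight, or equivalently \cref{sgreedy}) on the elements of $S'$ using weights $w_i$; this returns a maximum-$w_i$-weight independent subset $T'\subseteq S'$, and by the matroid greedy exchange argument its weight is $w_i(T') = \sum_{e\in S'} \bigl(w_i(e) - w_i^*|_{S'}(e)\bigr)$ where $w_i^*|_{S'}(e)$ is the analogous span-threshold restricted to $S'$. I would then argue $w_i^*|_{S'}(e)\le w_i^*(e)\le w_i(e)-g(e)$ for $e\in S'$, since a smaller set $S'\subseteq S$ spans fewer elements (monotonicity of $\spn$) and since at the time $e$ was added $w_i(e)=w_i^*(e)+g(e)$ with $w_i^*(e)$ computed against the heavier elements of $S$ at that time — and crucially the $w_i$-weights of elements already in $S$ never change afterwards, so $w_i^*(e)$ as computed at insertion time equals $w_i^*(e)$ computed against the final $S$. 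Hence $w_i(T')\ge \sum_{e\in S'} g(e) = g(S')$, which is the first claim.

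For the second claim I would apply the first with $S'=S$ to get $T_i\subseteq S$ independent in $M_i$ with $w_i(T_i)\ge g(S)$; it remains to show $w_i(T_i)$ can be taken $\le g(S)$, i.e. the maximum-$w_i$-weight independent set over all of $E$ is in fact a subset of $S$ with weight exactly $g(S)$. The point here is the classical identity: the maximum $w_i$-weight of an independent set equals $\sum_{e\in E} (w_i(e)-w_i^{**}(e))$ where $w_i^{**}(e)$ is the largest $\theta$ with $e\in\spn_{M_i}(\{f\in E: w_i(f)\ge\theta\})$, the span taken over the \emph{whole} ground set. I would argue that for $e\notin S$ we have $w_i(e)=w_i^*(e)=w_i^{**}(e)$ — that is, $e$ contributes zero — because $e$ was rejected precisely because it was already spanned (with the relevant threshold) by heavier elements of $S\subseteq E$, and adding the lighter non-$S$ elements of $E$ cannot decrease the span threshold; and for $e\in S$ we have $w_i(e)-w_i^{**}(e)=g(e)$, again using that the heavier elements relevant to $e$'s threshold all lie in $S$ and were present when $e$ was inserted, so $w_i^{**}(e)=w_i^*(e)=w_i(e)-g(e)$. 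Summing gives that the optimal $w_i$-independent set has weight exactly $\sum_{e\in S}g(e)=g(S)$ and — since non-$S$ elements contribute nothing — can be chosen inside $S$; combined with the first claim's lower bound this pins $w_i(T_i)=g(S)$.

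The main obstacle, and the place I would spend the most care, is the bookkeeping around \emph{when} thresholds are computed. One must verify that $w_i$-weights of elements in $S$ are frozen once assigned (so that $w_i^*(e)$ evaluated at $e$'s arrival coincides with the quantity $\max\{\theta: e\in\spn_{M_i}(\{f\in S: w_i(f)\ge\theta\})\}$ for the \emph{final} $S$), and that later-arriving elements, which necessarily have $w_i$-weight formed from their own $w_i^*$ plus gain, cannot retroactively lower the threshold of $e$ — intuitively they are "below" $e$ in the weight order relevant to $e$, or if above, they were already accounted for. Making this monotonicity argument airtight — essentially that \cref{mlr2} simulates, matroid by matroid, the optimal greedy algorithm under weights $w_i$, with the gains $g(e)$ playing the role of the "marginal weight" increments — is the crux; once it is in place, both the inequality $w_i(T')\ge g(S')$ and the exact identity $w_i(T_i)=g(S)$ follow from the standard greedy/exchange analysis for a single matroid, together with monotonicity of $\spn_{M_i}$ under $S'\subseteq S\subseteq E$.
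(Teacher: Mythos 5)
Your high-level strategy is the same as the paper's — reduce to the optimality of the greedy algorithm for a single matroid and compare span-thresholds — but two of your key technical claims are false, and they sit exactly at the point you yourself flag as the crux. First, the identity $w_i(T')=\sum_{e\in S'}\bigl(w_i(e)-w_i^*|_{S'}(e)\bigr)$, with $w_i^*|_{S'}(e)$ the threshold computed against all of $S'$, is not the greedy identity: since $e$ itself lies in $\{f\in S' : w_i(f)\geq w_i(e)\}$ and is spanned by any set containing it, one gets $w_i^*|_{S'}(e)\geq w_i(e)$, so the right-hand side is nonpositive. Any per-element decomposition of $\max_{T'\subseteq S',\,T'\in I_i} w_i(T')=\int_0^\infty \rank(\{e\in S' : w_i(e)\geq\theta\})\,d\theta$ must measure each element against a \emph{prefix} in some linear order, not against the whole set. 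Second, the claim that the threshold of $e$ at insertion time coincides with the threshold against the final $S$ (and dominates the one against $S'$) is wrong: later-arriving elements can \emph{raise} the threshold, not merely fail to lower it. Already in the bipartite example of Figure~\ref{fig:simpleLR}, $e_1$ has $w_2^*(e_1)=0$ at arrival, but in the final $S$ it is spanned in $M_2$ by $e_2$ with $w_2(e_2)=2$, so the threshold of $e_1$ against the final $S$ (or against $S'=\{e_1,e_2\}$) is $2$. Your inequality chain $w_i^*|_{S'}(e)\leq w_i^*(e)\leq w_i(e)-g(e)$ therefore fails.

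The repair is the paper's telescoping: index $S'=\{e_1,\dots,e_\ell\}$ in the order of insertion into $S$, set $S'_j=\{e_1,\dots,e_j\}$, and write the integral as $\sum_j\int_0^\infty\bigl(\rank(\{e\in S'_j : w_i(e)\geq\theta\})-\rank(\{e\in S'_{j-1} : w_i(e)\geq\theta\})\bigr)\,d\theta$. The $j$-th increment equals $w_i(e_j)$ minus the threshold of $e_j$ against the prefix $S'_{j-1}$, and since $S'_{j-1}$ \emph{is} a subset of the $S$ present when $e_j$ arrived, monotonicity of $\spn$ gives that this threshold is at most $w_i^*(e_j)$, so each increment is at least $g(e_j)$ — with equality when $S'=S$, because then $S'_{j-1}$ is exactly the $S$ at $e_j$'s arrival. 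Your treatment of the extension to the whole ground set (elements outside $S$ contribute nothing because they are spanned by $S$-elements of $w_i$-weight at least their own) is correct and matches the paper.
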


\begin{proof}
    Consider matroid $M_1$ (the proof is identical for $M_2$) and fix $S' \subseteq S$. The set $T_1' \subseteq S'$ that is independent in $M_1$ and that maximizes $w_1(T_1')$ satisfies
    \begin{align*}
        w_1(T'_1)= \int_0^\infty \rank(\{e \in T_1'\mid w_1(e) \geq \theta\})\,d\theta = \int_0^\infty \rank(\{e \in S'\mid w_1(e) \geq \theta\})\,d\theta \,.
    \end{align*}
    The second equality follows from the fact that the  greedy algorithm that considers the elements in decreasing order of weight is optimal for matroids and thus we have $\rank(\{e\in T_1'\mid w_1(e) \geq \theta\}) = \rank(\{e\in S'\mid w_1(e) \geq \theta\})$ for any $\theta\in \mathbb{R}$.
    
    Now index the elements of $S' = \{e_{1}, e_2, \ldots, e_\ell\}$ in the order they were added to $S$ by \cref{mlr2} and let $S'_j = \{e_1, \ldots, e_j\}$ for $j=0,1, \ldots, \ell$ (where $S'_0 = \emptyset$). By the above equalities and by telescoping,  
    \begin{align*}
         w_1(T'_1) & = \sum_{i=1}^\ell   \int_0^\infty  \left(\rank(\{e \in S_i'\mid w_1(e) \geq \theta\})- \rank(\{e \in S_{i-1}'\mid w_1(e) \geq \theta\})\right)\,d\theta\,.
    \end{align*}
    We have that $\rank(\{e \in S_i'\mid w_1(e) \geq \theta\})- \rank(\{e \in S_{i-1}'\mid w_1(e) \geq \theta\})$ equals $1$ if $w(e_i) \geq \theta$ and $e_i\not \in \spn( \{e \in S_{i-1}'\mid w_1(e) \geq \theta\})$ and it equals $0$ otherwise. Therefore, by the definition of $w_1^*(\cdot)$, the gain $g(\cdot)$ and $w_1(e_i) = w_1^*(e_i) + g(e_i)$ in \cref{mlr2} we have
    \begin{align*}
        w_1(T'_1) & = \sum_{i=1}^\ell \left[ w_1(e_i)  - \max\left( \{0\} \cup \lbrace \theta: e_i \in \spn\left(\lbrace f\in S'_{i-1}\ |\ w_i(f)\geq \theta\rbrace\right)\rbrace \right)\right]
         \geq \sum_{i=1}^\ell g(e_i) = g(S')\,.
    \end{align*}
    The inequality holds because $S'_{i-1}$ is a subset of the set $S$ at the time when \cref{mlr2} considers element $e_i$. Moreover, if $S' = S$, then $S'_{i-1}$ equals the set $S$ at that point and so we then have $w^*_1(e_i)= \max\left( \{0\} \cup \lbrace \theta: e_i \in \spn\left(\lbrace f\in S'_{i-1}\ |\ w_i(f)\geq \theta\rbrace\right)\rbrace \right)$, which implies that the above inequality holds with equality in that case. We can thus also conclude  that a maximum weight independent set $T_1 \subseteq S$ satisfies $w_1(T_1) = g(S)$. Finally, we can observe that $T_1$ is also a maximum weight independent set over the whole ground set since  we have  $\rank(\{e\in S\mid w_1(e) \geq \theta\}) = \rank(\{e\in E\mid w_1(e) \geq \theta\})$ for every $\theta > 0$, which holds because, by the extension of $w_1$, an element $e\not \in S$ satisfies $e\in \spn(\{f\in S: w_1(f) \geq w_1(e)\})$.
\end{proof}

We can now relate the gain of the elements in $S$ with the weight of an optimal solution.

\begin{lemma}\label{gee}
    Let $S$ be the subset generated by \cref{mlr2}.  Then $g(S) \geq w(S^*)/2$.
\end{lemma}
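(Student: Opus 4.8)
The plan is to bound $w(S^*)$ in terms of $g(S)$ by charging each optimal element $e^* \in S^*$ to the gains accumulated by the elements of $S$ that were "responsible" for it, i.e., that either blocked $e^*$ in one of the matroids or that $e^*$ itself contributed to when it was processed (if $e^* \in S$). The natural identity to exploit is $w(e^*) = w_1^*(e^*) + w_2^*(e^*) + g(e^*)$, which holds when $e^*$ was added to $S$, and the inequality $w(e^*) \le w_1^*(e^*) + w_2^*(e^*)$ when $e^*$ was rejected; either way, with the extension $w_i(e^*) = w_i^*(e^*)$ for rejected elements, we get $w(e^*) \le w_1(e^*) + w_2(e^*)$ always (with equality replaced by $\le$, and $g(e^*)$ "absorbed" into the left if $e^*\in S$; more carefully $w(e^*)\le w_1^*(e^*)+w_2^*(e^*)+g(e^*)\cdot[e^*\in S]$). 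Summing over $e^* \in S^*$ then gives $w(S^*) \le \sum_{i=1}^2 w_i^*(S^* ) + g(S^* \cap S)$, and the task reduces to showing $\sum_{e^*\in S^*} w_i^*(e^*) \le g(S)$ for each $i$, after which $w(S^*)\le 2g(S) + g(S) - \ldots$; this is slightly too weak, so the real argument must be tighter — I expect one needs $\sum_{e^*\in S^*} w_i(e^*) \le g(S)$ for each $i\in\{1,2\}$, which combined with $w(e^*) \le w_1(e^*)+w_2(e^*)$ immediately yields $w(S^*) \le 2 g(S)$.

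So the key step I would carry out is: for each matroid $M_i$, show $\sum_{e^*\in S^*} w_i(e^*) \le g(S)$. Since $S^*$ is independent in $M_i$, and using that $w_i(T_i) = g(S)$ for a maximum-weight independent set $T_i$ of $w_i$ over the whole ground set (this is exactly the "Furthermore" part of Lemma~\ref{goodsubset}), it suffices to show that $S^*$ has $w_i$-weight at most that of the maximum-weight independent set, i.e. $w_i(S^*) \le w_i(T_i)$. But $S^*$ is itself independent in $M_i$, so $w_i(S^*) \le \max\{w_i(J) : J \in I_i\} = w_i(T_i) = g(S)$ trivially by optimality of $T_i$. That closes it: $w(S^*) = \sum_{e^*\in S^*} w(e^*) \le \sum_{e^*\in S^*}\big(w_1(e^*)+w_2(e^*)\big) = w_1(S^*)+w_2(S^*) \le g(S) + g(S) = 2g(S)$, hence $g(S)\ge w(S^*)/2$.

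The main obstacle, and the point I would be most careful about, is justifying the pointwise inequality $w(e^*) \le w_1(e^*) + w_2(e^*)$ for every $e^*\in S^*$, including elements $e^*$ that were rejected by \cref{mlr2}. For rejected $e^*$ this is immediate from the rejection criterion $w(e^*) \le w_1^*(e^*) + w_2^*(e^*)$ together with the extension $w_i(e^*) := w_i^*(e^*)$. For accepted $e^*$ we have $w(e^*) = w_1^*(e^*)+w_2^*(e^*)+g(e^*) = w_1(e^*)+w_2(e^*) - g(e^*) \le w_1(e^*)+w_2(e^*)$ since $g(e^*)\ge 0$. A secondary subtlety is that $w_i^*(e)$ — and hence the extended $w_i(e)$ for rejected $e$ — is computed relative to the set $S$ \emph{at the moment $e$ arrives}, whereas the "maximum weight independent set $T_i$" in Lemma~\ref{goodsubset} uses the final $w_i$ on the final $S$; I need that the extended $w_i$ is consistent with the monotonicity used in Lemma~\ref{goodsubset}'s final claim (namely that every $e\notin S$ lies in $\spn(\{f\in S: w_1(f)\ge w_1(e)\})$), which is precisely what the last sentence of that lemma's proof records, so no new work is needed — I would just cite it. With these two points pinned down, the rest is the short chain of inequalities above.
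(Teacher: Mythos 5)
Your proof is correct and follows essentially the same route as the paper: establish the pointwise bound $w(e)\le w_1(e)+w_2(e)$ for all $e$ (using the acceptance/rejection criterion and the extension $w_i(e)=w_i^*(e)$ for rejected elements), sum over $S^*$, and then bound $w_i(S^*)\le w_i(T_i)=g(S)$ via the ``Furthermore'' part of Lemma~\ref{goodsubset}. The subtleties you flag at the end are exactly the ones the paper's proof of Lemma~\ref{goodsubset} handles, so citing it as you do suffices.
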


\begin{proof}
    We first observe that $w_1(e) + w_2(e) \geq w(e)$ for every element $e\in E$. Indeed, for an  element $e\in S$, we have by definition $w(e)=g(e)+w_1^*(e)+w_2^*(e)$, and $w_i(e)=g(e)+w_i^*(e)$, so  $w_1(e)+ w_2(e)=2g(e)+w_1^*(e)+w_2^*(e)=w(e)+g(e)>w(e)$. In the other case, when $e \not \in S$ then $w_1^*(e)+w_2^*(e)\geq w(e)$, and $w_i(e)=w_i^*(e)$, so automatically, $w_1(e)+w_2(e)\geq w(e)$.
    
   The above implies that   $w_1(S^*)+w_2(S^*)\geq w(S^*)$. On the other hand, by \cref{goodsubset}, we have $w_i(T_i)\geq w_i(S^*)$ (since $T_i$ is a max weight independent set in $M_i$ with respect to $w_i$) and $w_i(T_i)=g(S)$, thus $g(S)\geq w_i(S^*)$ for $i=1,2$.
\end{proof}

We finish the proof of \cref{k2} by proving that there is a $T \subseteq S$ independent in both $M_1$ and $M_2$ such that $w(T) \geq g(S)$. As described in \cref{sec:counterexample}, we cannot select $T$ using the greedy method. Instead, we select $T$ using the concept of kernels studied in~\cite{Fleiner}.

\begin{lemma}\label{lemma:atleastgain}
    Let $S$ be the subset generated by \cref{mlr2}. Then there exists a subset $T\subseteq S$ independent in $M_1$ and in $M_2$ such that $w(T) \geq g(S)$.
\end{lemma}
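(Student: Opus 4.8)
The goal is to extract from $S$ a common independent set $T$ with $w(T)\ge g(S)$. The natural object to use is a \emph{kernel} of Fleiner: recall that if $M_1,M_2$ are matroids on $E$ and each comes with a linear order on its elements (here induced by the auxiliary weights $w_1$ and $w_2$), then there exists a common independent set $T\in I_1\cap I_2$ that is ``stable'': every element $e\in S\setminus T$ is dominated, in the $M_1$-order or the $M_2$-order, by the span of the heavier $T$-elements in the corresponding matroid. More precisely, for every $e\in S$ either $e\in\spn_{M_1}(\{f\in T: w_1(f)\ge w_1(e)\})$ or $e\in\spn_{M_2}(\{f\in T: w_2(f)\ge w_2(e)\})$. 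So the first step is to invoke the kernel existence theorem from~\cite{Fleiner} with the two orders given by $w_1$ and $w_2$ (breaking ties consistently, e.g.\ by arrival time) and let $T$ be the resulting kernel restricted to $S$.

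The second step is to lower-bound $w(T)$ by $g(S)$. The clean way is through the layer-cake / rank-integral identity already used in \cref{goodsubset}. For a kernel $T$, I claim that for every threshold $\theta>0$,
\[
\rank_{M_1}(\{e\in S: w_1(e)\ge\theta\}) = \rank_{M_1}(\{e\in T: w_1(e)\ge\theta\})
\]
and symmetrically for $M_2$ — this is exactly the defining stability property of a kernel applied level by level (any $e\in S$ at level $\theta$ in $M_1$ that is not spanned by the $T$-elements at level $\ge\theta$ in $M_1$ must, by the kernel property, be spanned by the $T$-elements at level $\ge\theta$ in $M_2$, and vice versa; so at each level the kernel ``covers'' $S$ in at least one of the two matroids, and in fact covers it in both in the sense needed). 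Combining with \cref{goodsubset} (which gives $w_i(T_i)=g(S)$ with $T_i$ a max-weight independent set of $S$ under $w_i$, whose weight is the rank integral over $S$), this yields $w_1(T)=g(S)=w_2(T)$. Then since $w_1(e)+w_2(e)\ge w(e)$ for $e\in S$ (in fact $w_1(e)+w_2(e)=w(e)+g(e)$ on $S$, shown in \cref{gee}), but more usefully since on $S$ we have $w(e)=g(e)+w_1^*(e)+w_2^*(e)\le g(e)+w_1(e)+w_2(e)-g(e)$… — the cleanest route is: for $e\in T$, $w(e)\le w_1(e)+w_2(e)-g(e)$? No. I would instead argue $w(T)\ge \max(w_1(T),w_2(T))$ is false in general, so the right statement to prove is $w(T)=w_1(T)+w_2(T)-g(T)$ and then bound — actually the correct and simplest claim is $w(e)\ge w_1(e)+w_2(e)-g(e)$ rearranges to something; let me just commit to: show $w_1(T)=w_2(T)=g(S)$ via the kernel, and show $w(T)\ge w_1(T)$ — no. The honest plan: show $w(e) = w_1(e)+w_2(e)-g(e) \ge w_1(e)$ is wrong. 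I will instead use $w(e)=w_1^*(e)+w_2^*(e)+g(e)$ and $w_i(e)\ge w_i^*(e)$ to get $w(e)\le w_1(e)+w_2(e)-g(e)$, which is the wrong direction, so the correct finish must be: $w(T)\ge g(S)$ follows because $w(e)=w_1(e)+w_2(e)-g(e)$ and we only need one of the two weights, picking up $g$ as slack.

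**Main obstacle.** The crux — and the step I expect to be delicate — is the level-by-level rank identity for the kernel, i.e.\ proving that restricting $S$ to the kernel $T$ does not drop the rank at any threshold in \emph{either} matroid simultaneously with the bookkeeping of $g$. The subtlety is that the kernel property only guarantees coverage in \emph{one} of the two matroids per element, whereas to recover $w_1(T)=g(S)$ I need the $M_1$-rank of $T$ at every level to equal that of $S$; this requires choosing the orders and the kernel carefully (the $w_1,w_2$ orders are not arbitrary — they were produced by \cref{mlr2} so that $w_1^*,w_2^*$ are exactly the span-thresholds), and then arguing that an element of $S$ not covered by $T$ in $M_1$ at its level contributes zero to the $w_1$-rank integral anyway because its $w_1$-weight equals its $w_1^*$, i.e.\ it was already ``charged'' to $g$. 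Making this accounting precise — essentially showing $g(S)=g(T)$, or that the elements of $S\setminus T$ carry no net gain relevant to $w_1(T)$ — is where the real work lies, and it is exactly the place where the argument would break for $k\ge 3$ matroids, since no kernel exists there.
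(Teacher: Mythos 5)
You correctly identify Fleiner's kernel theorem, applied with the orders induced by $w_1$ and $w_2$, as the right tool --- this is exactly the route the paper takes. However, your key quantitative claim, namely that for every threshold $\theta$ one has $\rank_{M_1}(\{e\in S: w_1(e)\ge\theta\}) = \rank_{M_1}(\{e\in T: w_1(e)\ge\theta\})$ and symmetrically for $M_2$, hence $w_1(T)=w_2(T)=g(S)$, is false. The kernel property only guarantees that each element of $S\setminus T$ is dominated in \emph{one} of the two matroids; your parenthetical ``and in fact covers it in both in the sense needed'' is precisely the unjustified leap. The paper's own example from \cref{sec:counterexample} refutes it: there $S=\{a,b,c,d\}$ with $w_1(a)=1$, $w_1(b)=1+\epsilon$, $w_1(c)=w_1(d)=\epsilon$ and $g(S)=1+3\epsilon$, yet any $T$ independent in both matroids has at most two elements and cannot contain both $a$ and $b$, so $w_1(T)\le w_1(\{b,c\})=1+2\epsilon<g(S)$. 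Indeed, a $T$ with $w_1(T)=w_2(T)=g(S)$ would simultaneously be a maximum-$w_1$-weight independent set of $M_1$ and a maximum-$w_2$-weight independent set of $M_2$, which is far too strong to hope for.

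The correct accounting --- which your proposal gestures at but never completes; the closing paragraphs cycle through several candidate inequalities, reject each, and end with ``picking up $g$ as slack'' without a proof --- is weaker and asymmetric. Writing $S_i=D_{M_i}(T)$ for the set of elements of $S$ dominated by $T$ in $M_i$, one only gets $w_i(T)\ge g(S_i)$ for each $i$ separately: this follows from \cref{goodsubset} applied to the subset $S_i$, combined with the observation that the greedy algorithm on $S_i$ in the order $<_i$ selects exactly $T$, so $T$ is $w_i$-optimal within $S_i$. One then uses the exact identity $w(e)=w_1(e)+w_2(e)-g(e)$ on $S$ to obtain $w(T)\ge g(S_1)+g(S_2)-g(T)$, and finishes with the double-counting inequality $g(S_1)+g(S_2)\ge g(S)+g(T)$, valid because every element of $S$ lies in at least one $S_i$ while every element of $T$ lies in both. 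Without this last step your argument does not close, so as written the proposal has a genuine gap at its central step.
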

\begin{proof}
    Consider one of the matroids $M_i$ with $i\in \{1,2\}$ and define a total order $<_i$ on $E$ such that $e<_i f$ if $w_i(e) > w_i(f)$ or if $w_i(e) = w_i(f)$ and $e$ appeared later in the stream than $f$. The pair $(M_i, <_i)$  is known as an ordered matroid. We further say that  a subset $E'$ of $E$ dominates element $e$ of $E$ if $e\in E'$ or there is a subset $C_e\subseteq E'$ such that $e\in \spn(C_e)$ and $c<e$ for all elements $c$ of $C_e$. The set of elements dominated by $E'$ is denoted by $D_{M_i}(E')$. Note that if $E'$ is an independent set, then the greedy algorithm that considers the elements of $D_{M_i}(E')$ in the order $<_i$ selects exactly the elements $E'$.
    
    Theorem~2 in~\cite{Fleiner} says that for two ordered matroids $(M_1, <_1), (M_2, <_2)$ there always is a set $K\subseteq E$, which is referred to as a $M_1M_2$-kernel, such that
    \begin{itemize}
        \item $K$ is independent in both  $M_1$ and in $M_2$; and
        \item $D_{M_1}(K) \cup D_{M_2}(K) = E$.
    \end{itemize}

    We use the above result on $M_1$ and $M_2$ restricted to the elements in $S$. Specifically we select $T\subseteq S$ to be the kernel such that $D_{M_1}(T) \cup D_{M_2}(T) = S$. Let $S_1 = D_{M_1}(T)$ and $S_2 = D_{M_2}(T)$. 
     By \cref{goodsubset}, there exists a set $T'\subseteq S_1$ independent in $M_1$ such that $w_1(T')\geq g(S_1)$. As noted above, the greedy algorithm that considers the element of $S_1$ in the order $<_i$ (decreasing weights) selects exactly the elements in $T$. It follows by the optimality of the greedy algorithm for matroids that $T$ is optimal for $S_1$ in $M_1$ with weight function $w_1$, which in turn implies $w_1(T)\geq g(S_1)$. In the same way, we also have $w_2(T)\geq g(S_2)$. By definition, for any $e\in S$, we have $w(e)=w_1(e)+w_2(e)-g(e)$. Together, we have $w(T)=w_1(T)+w_2(T)-g(T)\geq g(S_1)+g(S_2)-g(T)$. As elements from $T$ are in both $S_1$ and $S_2$, and all other elements are in at least one of both sets, we have $g(S_1)+g(S_2)\geq g(S)+g(T)$, and thus $w(T)\geq g(S)$. 
\end{proof}

\section{Making the Algorithm Memory Efficient}
\label{sec:algmemefficient}

We now modify \cref{mlr2} to only select elements with a significant gain, parametrized by $\alpha>1$, and delete elements if we have too many in memory, parametrized by a real number $y$. If $\alpha$ is close enough to 1 and $y$ is large enough, then \cref{ssmlr} is very close to \cref{mlr2}, and allows for a similar analysis. This method is very similar to the one used in \cite{Paz} and \cite{Wajc}, but our analysis is quite different. 

More precisely, we take an element $e$ only if $w(e)>\alpha(w_1^*(e)+w_2^*(e))$ instead of $w(e)>w_1^*(e)+w_2^*(e)$, and we delete elements if the ratio between two $g$ weights becomes larger than $y$ ($\frac{g(e)}{g(e')}>y$). For technical purposes, we also need to keep independent sets $T_1$ and $T_2$ which maximize the weight functions $w_1$ and $w_2$ respectively. If an element with small $g$ weight is in $T_1$ or $T_2$, we do not delete it, as this would modify the $w_i$-weights and selection of coming elements. We show that this algorithm is a semi-streaming algorithm with an approximation guarantee of $(2+\epsilon)$ for an appropriate selection of the parameters (see \cref{streaming} for the space requirement  and \cref{mainresult} for the approximation guarantee).

\begin{lemma}\label{nfty}
Let $S$ be the subset generated by \cref{ssmlr} with $\alpha\geq 1$ and $y=\infty$. Then $w(S^*)\leq 2\alpha g(S)$.
\end{lemma}

\begin{proof}
We define $w_\alpha:E\rightarrow \mathbb{R}$ by $w_\alpha(e)=w(e)$ if $e\in S$ and $w_\alpha(e)=\frac{w(e)}{\alpha}$ otherwise. By construction, \cref{ssmlr} and \cref{mlr2} give the same set $S$, and the same weight function $g$ for this modified weight function. By \cref{gee}, $w_\alpha(S^*)\leq 2g(S)$. On the other hand, $w(S^*)\leq \alpha w_\alpha(S^*)$.
\end{proof}

\begin{lemma}\label{streaming}
Let $S$ be the subset generated generated by \cref{ssmlr} with $\alpha=1+\epsilon$ and $y=\frac{\min(r_1,r_2)}{\epsilon^2}$ and $S^*$ be a maximum weight independent set, where $r_1$ and $r_2$ are the ranks of $M_1$ and $M_2$ respectively. Then $w(S^*)\leq 2(1+2\epsilon+o(\epsilon))g(S)$. Furthermore, at any point of time, the size of $S$ is at most $r_1+r_2+ \min(r_1,r_2)\log_\alpha(\frac{y}{\epsilon})$.
\end{lemma}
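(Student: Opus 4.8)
The plan is to prove the two claims separately, since they concern rather different aspects of \cref{ssmlr}: the approximation bound and the space bound.

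For the space bound, I would analyze what elements can be present in memory at any fixed time. The deletion rule removes an element $e'$ whenever there is another element $e$ with $g(e)/g(e') > y$, except that we never delete elements lying in the maintained maximum-weight independent sets $T_1$ or $T_2$. So the elements in memory split into two groups. The first group is $T_1 \cup T_2$, which has size at most $r_1 + r_2$. The second group consists of elements that survived because their $g$-value is within a factor $y$ of the largest $g$-value currently present; I would want to bound how many such elements there can be by a per-matroid counting argument. The natural claim is: for each matroid $M_i$ and each ``weight level'' (geometric bucket of the $g$-values, of which there are $O(\log_\alpha(y/\epsilon))$ many within a factor $y$ window), the number of stored elements whose contribution to matroid $M_i$ is at that level is at most $\rank(M_i) = r_i$, because elements of strictly larger or equal $w_i$-weight that are already present span it — i.e., the selection rule $w(e) > \alpha(w_1^*(e) + w_2^*(e))$ together with the fact that $w_i^*(e)$ is (essentially) the largest $\theta$ for which $e$ is spanned by elements of $w_i$-weight $\geq \theta$ forces a bounded number of survivors per level. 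Taking the smaller of $r_1, r_2$ as the per-level bound and multiplying by the number of levels $\log_\alpha(y/\epsilon)$ gives $\min(r_1,r_2)\log_\alpha(y/\epsilon)$, and adding $r_1+r_2$ for $T_1\cup T_2$ yields the claimed bound. The main obstacle here is making precise the ``per-level, per-matroid at most $\min(r_1,r_2)$ survivors'' statement: one has to be careful that deletions of other elements do not retroactively unspan a surviving element and thereby allow more survivors, so I would argue with the set $S$ at the final time and show that each element of $S$ with a given $(g\text{-bucket}, i)$ label beyond the first $r_i$ would have been rejected at arrival.

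For the approximation bound $w(S^*) \le 2(1 + 2\epsilon + o(\epsilon)) g(S)$, I would bound the loss caused by deletions relative to the ideal execution analyzed in \cref{nfty}. With $y = \infty$ (no deletions), \cref{nfty} already gives $w(S^*) \le 2\alpha\, g(S) = 2(1+\epsilon) g(S)$. When $y$ is finite, each deleted element $e'$ has $g$-value smaller than $g(S) / y$ times\,... more precisely, it is dominated by an element with $g$-value at least $y \cdot g(e')$, and the total $g$-mass of all deleted elements can be charged against the surviving elements: at the moment we delete $e'$ because of $e$ with $g(e)/g(e') > y$, we pay $g(e')$ out of a ``budget'' attached to $e$, and since at most $\min(r_1,r_2)$ elements per bucket can point to the same $e$ across the whole stream (reusing the counting above), the total deleted $g$-mass is at most $O(\min(r_1,r_2)/y)\cdot g(S) = O(\epsilon^2 \cdot \epsilon^{-2} \cdot \text{something})$... — this needs the choice $y = \min(r_1,r_2)/\epsilon^2$ to make the deleted mass an $O(\epsilon)$ fraction of $g(S)$. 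Plugging the relation ``$g(S_{y=\infty}) \le g(S) + (\text{deleted mass}) \le g(S)(1 + O(\epsilon))$'' into $w(S^*) \le 2(1+\epsilon) g(S_{y=\infty})$ gives $w(S^*) \le 2(1+\epsilon)(1+O(\epsilon)) g(S) = 2(1 + 2\epsilon + o(\epsilon)) g(S)$.

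I expect the main obstacle to be the bookkeeping that ties the two parts together: both the space bound and the deletion-loss bound rest on the same combinatorial fact that, per geometric weight bucket and per matroid, only $\min(r_1,r_2)$ elements are ever kept, and one must state and prove this fact once in a robust form (robust to the fact that memory changes over time) before deploying it twice. A secondary subtlety is that $T_1, T_2$ are protected from deletion, so when bounding the deleted $g$-mass one has to note that protected elements are not deleted and therefore do not need to be charged, and when bounding space one adds them back in as the explicit $r_1 + r_2$ term; keeping these two accountings consistent is where I would be most careful.
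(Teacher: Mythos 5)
Your high-level plan matches the paper's: apply \cref{nfty} to the set $S'$ of all elements ever accepted, show the total gain of deleted elements is an $O(\epsilon)$ fraction of $g(S)$, and get the space bound from a per-matroid geometric-level count. Your space argument is workable, provided the buckets are taken with respect to the $w_i$-weights for the smaller-rank matroid rather than the $g$-values: an accepted element satisfies $w_i(e)>\alpha w_i^*(e)$, and since elements are never re-added after deletion, any $r_i+1$ stored elements with $w_i$-weights in a window $[\theta,\alpha\theta)$ would force the last-arriving one to have $w_i^*\geq\theta$ and hence $w_i>\alpha\theta$, a contradiction; together with $g(e)\leq w_i(e)<g(e)/\epsilon$ and the deletion rule this yields $\log_\alpha(y/\epsilon)$ relevant levels with at most $\min(r_1,r_2)$ survivors each, plus $r_1+r_2$ for the protected $T_1\cup T_2$. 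This is a legitimate alternative to the paper's pigeonhole-on-stacks argument.

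The genuine gap is in the bound on the deleted $g$-mass, which is the heart of the first claim. Your charging rule (``at most $\min(r_1,r_2)$ elements per bucket can point to the same $e$ across the whole stream'') does not follow from the fixed-time bucket count: once deletions occur, a $w_i$-level can be emptied and re-populated repeatedly, so the number of elements \emph{ever} deleted is not controlled per level, and the total deleted mass is not $O(\min(r_1,r_2)/y)\cdot g(S)$ as you assert. (Note also that your arithmetic would give $O(\epsilon^2)g(S)$, whereas the correct and needed bound is $\epsilon\, g(S)$; an extra factor $1/\epsilon$ is unavoidable when converting $w_i$-weights back to gains.) The missing device is the paper's replacement-chain decomposition: maintaining the maximum-weight independent set $T_i$ of the smaller-rank matroid, each accepted element either starts a new chain or is stacked on the element of $T_i$ it replaces, which partitions \emph{all} of $S'$ (deleted elements included) into exactly $r_i$ chains along which $w_i$ grows by a factor $>\alpha$ per step. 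Within a chain, the gains of everything at or below the topmost deleted element $e_{del}$ telescope to at most $w_i(e_{del})<g(e_{del})/\epsilon<g_{max}/(y\epsilon)$, and summing over the $r_i$ chains gives deleted mass $<\epsilon\, g_{max}\leq\epsilon\, g(S)$. Without this (or an equivalent) decomposition of all of $S'$, the first claim is not established. A secondary point to make explicit: invoking \cref{nfty} on $S'$ is only valid because protecting $T_1,T_2$ from deletion keeps every $w_i^*$ computation identical to the deletion-free run.
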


\begin{proof}

We first prove that the generated set $S$ satisfies $w(S^*)\leq 2(1+2\epsilon+o(\epsilon))g(S)$ and we then verify the space requirement of the algorithm, i.e., that it is a semi-streaming algorithm.

Let us call $S'$ the set of elements selected by \cref{ssmlr}, including the elements deleted later. By \cref{nfty}, we have $2\alpha g(S')\geq w(S^*)$, so all we have to prove is that $g(S')-g(S)\leq \epsilon g(S)$. We set $i\in \lbrace 1,2\rbrace$ to be the index of the matroid with smaller rank.

In our analysis, it will be convenient to think that the algorithm maintains the maximum weight independent set $T_i$ of $M_i$ throughout the stream. We have,  
at the arrival of an element $e$ that is added to $S$, that the set $T_i$ is updated as follows. If $T_i \cup \{e\}\in I_i$ then $e$ is simply added to $T_i$. Otherwise,  before updating $T_i$, there is an element $e^*\in T_i$ such that $w_i(e
^*)=w_i^*(e)$ and $T_i\setminus\lbrace e
^*\rbrace\cup \lbrace e\rbrace$ is maximum weight independent set in $M_i$ with respect to $w_i$. Thus we can speak of elements which are \emph{replaced} be another element in $T_i$. By construction, if $e$ replaces $f$ in $T_i$, then $w_i(e)>\alpha w_i(f)$.

We can now divide the elements of $S'$ into stacks in the following way: If $e$ replaces an element $f$ in $T_i$, then we add $e$ on top of the stack containing $f$, otherwise we create a new stack containing only $e$. At the end of the stream, each element $e\in T_i$ is in a different stack, and each stack contains exactly one element of $T_i$, so let us call $S_e'$ the stack containing $e$ whenever $e\in T_i$. We define $S_e$ to be the restriction of $S_e'$ to $S$. In particular, each element from $S'$ is in exactly one $S_e'$ stack, and each element from $S$ is in exactly one $S_e$ stack. For each stack $S_e'$, we set $e_{del}(S_e')$ to by the highest element of $S'_e$ which was removed from $S$. By construction, $g(S_e')-g(S_e)\leq w_i(e_{del}(S_e'))$. On the other hand, $w_i(f)<\frac{1}{\epsilon}g(f)$ for any element $f\in S'$ (otherwise we would not have selected it), so $g(S_e')-g(S_e)<\frac{1}{\epsilon}g(e_{del}(S_e'))$. As $e_{del}(S_e')$ was removed from $S$, we have $g(e_{del}(S_e'))<\frac{g_{max}}{y}$. As there are exactly $r_i$ stacks, we get $g(S')-g(S)<r_i\frac{g_{max}\epsilon
^2}{r_i\epsilon}=\epsilon g_{max}\leq \epsilon g(S)$.

We now have to prove that the algorithm fits the semi-streaming criteria. In fact, the size of $S$ never exceeds $r_1+r_2+r_i \log_\alpha(\frac{y}{\epsilon})$. By the pigeonhole principle, if $S$ has at least $r_i \log_\alpha(\frac{y}{\epsilon})$ elements, then there is at least one stack $S_e$ which has at least $\log_\alpha(\frac{y}{\epsilon})$ elements. By construction, the $w_i$ weight increases by a factor of at least $\alpha$ each time we add an element on the same stack, so the $w_i$ weight difference between the lowest and highest element on the biggest stack would be at least $\frac{y}{\epsilon}$. As $w_i(f)<\frac{1}{\epsilon}g(f)$, the $g$ weight difference would be at least $y$, and we would remove the lowest element, unless it was in $T_1$ or $T_2$.


\end{proof}

\begin{theorem}\label{mainresult}
Let $S$ be the subset generated by running \cref{ssmlr} with $\alpha=1+\epsilon$ and $y=\frac{\min(r_1,r_2)}{\epsilon^2}$. Then there exists a subset $T\subseteq S$ independent in $M_1$ and in $M_2$ such that $w(T) \geq g(S)$. Furthermore, $T$ is a $2(1+2\epsilon+o(\epsilon))$-approximation for the intersection of two matroids. 
\end{theorem}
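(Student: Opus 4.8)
The plan is to assemble the result from two pieces already available: the kernel-based selection argument of \cref{sec:mlr2analysis}, which will give the subset $T\subseteq S$ with $w(T)\ge g(S)$, and the gain estimate of \cref{streaming}, which gives $w(S^*)\le 2(1+2\epsilon+o(\epsilon))\,g(S)$. Chaining these two inequalities yields $w(T)\ge g(S)\ge w(S^*)\big/\bigl(2(1+2\epsilon+o(\epsilon))\bigr)$, which is exactly the claimed approximation guarantee. So the only genuine work is to re-prove the statement of \cref{lemma:atleastgain} for the set $S$ produced by \cref{ssmlr} instead of by \cref{mlr2}.

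First I would verify that \cref{goodsubset} transfers to \cref{ssmlr}. Its proof is a telescoping identity over the elements of a fixed subset $S'\subseteq S$ indexed in order of arrival, and the only point where the dynamics of the algorithm enters is the inequality $w_i^*(e_i)\ge \max\bigl(\{0\}\cup\{\theta: e_i\in\spn_{M_i}(\{f\in S'_{i-1}: w_i(f)\ge\theta\})\}\bigr)$. The key observation is that an element surviving in the final set $S$ is never deleted, hence it stays in memory from its own arrival until the end of the stream; therefore, when \cref{ssmlr} processes an element $e_i\in S'$, the set $S'_{i-1}$ of elements of $S'$ that arrived before $e_i$ is contained in the memory at that time. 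Since $w_i^*(\cdot)$ is monotone under enlarging the underlying set, the inequality above holds in the needed direction, and the rest of the telescoping goes through verbatim; the identities $w_i(e)=w_i^*(e)+g(e)$ and hence $w(e)=w_1(e)+w_2(e)-g(e)$ for $e\in S$ are unaffected by the $\alpha$-threshold. Note that only the "subset of $S$" part of \cref{goodsubset} is needed here, not the statement about the whole ground set.

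With this in hand, the proof of \cref{lemma:atleastgain} applies word for word: define the ordered matroids $(M_i,<_i)$ from $w_i$, restrict $M_1,M_2$ to $S$, invoke Theorem~2 of \cite{Fleiner} to obtain an $M_1M_2$-kernel $T\subseteq S$ with $D_{M_1}(T)\cup D_{M_2}(T)=S$, set $S_1=D_{M_1}(T)$ and $S_2=D_{M_2}(T)$, and bound $w(T)=w_1(T)+w_2(T)-g(T)\ge g(S_1)+g(S_2)-g(T)\ge g(S)$, the last step using $g(S_1)+g(S_2)\ge g(S)+g(T)$ since $T\subseteq S_1\cap S_2$ and $S_1\cup S_2=S$. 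Plugging $w(T)\ge g(S)$ into $w(S^*)\le 2(1+2\epsilon+o(\epsilon))\,g(S)$ from \cref{streaming} finishes the proof. The one delicate point I would state carefully is precisely the transfer of \cref{goodsubset}: one must check that deletions (governed by the parameter $y$) cannot shrink the computed potential $w_i^*(e_i)$ below its $S'_{i-1}$-restricted value, and as argued this does not happen because surviving elements are never deleted prematurely. Everything else is bookkeeping over results already established in the excerpt.
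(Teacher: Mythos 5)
Your proposal is correct and follows essentially the same route as the paper: combine the bound $w(S^*)\le 2(1+2\epsilon+o(\epsilon))\,g(S)$ from \cref{streaming} with a transfer of \cref{lemma:atleastgain} (via \cref{goodsubset}) to the set produced by \cref{ssmlr}. The only difference is presentational: the paper justifies the transfer by observing that, since elements of $T_1\cup T_2$ are never deleted, the run of \cref{ssmlr} coincides with a run of \cref{mlr2} on the stream restricted to the selected elements, whereas you verify directly that the one inequality in \cref{goodsubset} survives because $S'_{i-1}$ stays in memory and $w_i^*$ is monotone under enlarging the underlying set --- an argument that is, if anything, spelled out more carefully than the paper's.
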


\begin{proof}
Let $S^*$ be a maximum weight independent set. By \cref{streaming}, we have $2(1+2\epsilon+o(\epsilon)g(S)\geq w(S^*)$. By \cref{lemma:atleastgain} we can select an independent set $T$ with $w(T)\geq g(S)$ if the algorithm does not delete elements. Let $S'$ be the set of elements selected by \cref{ssmlr}, including the elements deleted later. As long as we do not delete elements from $T_1$ or $T_2$, \cref{mlr2} restricted to $S'$ will select the same elements, with the same weights, so we can consider $S'$ to be generated by \cref{mlr2}. We now observe that all the arguments used in \cref{lemma:atleastgain} also work for a subset of $S'$, in particular, it is also true for $S$ that we can find an independent set $T\subseteq S$ such that $w(T)\geq g(S)$.
\end{proof}

\begin{remark}
Algorithm \ref{ssmlr} is not the most efficient possible in terms of memory, but is aimed to be simpler instead. Using the notion of stacks introduced in the proof of \cref{streaming}, it is possible to modify the algorithm and reduce the memory requirement by a factor $\log(\min(\rank(M_1),\rank(M_2)))$.
\end{remark}

\begin{remark}
    The techniques of this section can also be used in the case when the ranks of the matroids are unknown. Specifically, the algorithm can maintain the stacks created in the proof of \cref{streaming} and allow for an error $\epsilon$ in the first two stacks created, an error of $\epsilon/2$ in the next $4$ stacks, and in general an error of $\epsilon/2^i$ in the next $2^i$ stacks.
\end{remark}

\begin{remark}
It is easy to construct examples where the set $S$ only contains a $2\alpha$-approximation (for example with bipartite graphs), so up to a factor $\epsilon$ our analysis is tight.
\end{remark}

\begin{algorithm} 
\caption{Semi-streaming adaptation of \cref{mlr2}} 
\label{ssmlr} 
\begin{algorithmic} 
    \REQUIRE A stream of the elements and $2$ matroids (which we call $M_1,M_2$) on the same ground set $E$, a real number $\alpha>1$ and a real number $y$.
    \ENSURE A set $X\subseteq E$ that is independent in both matroids.
    \STATE Whenever we write an assignment of a variable with subscript $i$, it means we do it for $i=1,2$.
    \STATE $S \leftarrow \emptyset$
    \FOR{element $e$ in the stream}
    \STATE calculate $w^*_i(e) =\max\left( \{0\} \cup \lbrace \theta: e \in \spn_{M_i}\left(\lbrace f\in S\ |\ w_i(f)\geq \theta\rbrace\right)\rbrace \right)$.
    \IF{$w(e)>\alpha(w^*_1(e)+w^*_2(e))$}
    \STATE $g(e)\leftarrow w(e)- w^*_1(e)-w^*_2(e)$
    \STATE $S\leftarrow S\cup\lbrace e\rbrace$
    \STATE $w_i(e)\leftarrow g(e)+w^*_i(e)$
    \STATE Let $T_i$ be a maximum weight independent set of $M_i$ with respect to $w_i$.
    \STATE Let $g_{max}=\underset{e\in S}{\max}g(e)$
    \STATE Remove all elements $e'\in S$, such that $y\cdot g(e')<g_{max}$ and $e'\notin T_1\cup T_2$ from S.
    \ENDIF
    \ENDFOR
    \RETURN a maximum weight set $T\subseteq S$ that is independent in $M_1$ and $M_2$
    
\end{algorithmic}
\end{algorithm}

\section{Extension to submodular functions}
\label{submodular_section}
In this section, we consider the problem of submodular matroid intersection in the semi-streaming model. We first give the definition of a submodular function and then formally define our problem. 

\begin{definition}[Submodular function]
A set function $f: 2^E \rightarrow \mathbb{R}$ is submodular if it satisfies that for any two sets $A,B \subseteq E$, $f(A)+f(B)\geq f(A \cup B) + f(A \cap B)$. For any two sets $A,B \subseteq E $, let $f(A \mid B) := f(A \cup B) - f(B)$. For any element $e$ and set $A \subseteq E$, let $f(e \mid A) := f(A \cup \{ e\} \mid A)$. Now, an equivalent and more intuitive definition for $f$ to be submodular is that for any two sets $A \subseteq B \subseteq E$, and $e \in E \setminus B$, it holds that $f(e \mid A) \geq f(e \mid B)$. The function $f$ is called monotone if for any element $e \in E$ and set $A \subseteq E$, it holds that $f(e\mid A) \geq 0$.
\end{definition}

Given the above definition, we can formally define our problem now. Here, we are given an oracle access to two matroids $M_1 = (E, I_1), M_2 = (E, I_2)$ on a common ground set $E$ and an oracle access to non-negative submodular function $f: 2^E \rightarrow \mathbb{R}_{\geq 0}$ on the powerset of the elements of the ground set.  
The goal is to find a subset $X \subseteq E$ that is independent in both matroids, i.e., $X\in I_1$ and $X\in I_2$, and whose weight $f(X)$ is maximized.

Our Algorithm \ref{submodular_alg} is a straightforward generalization of Algorithm \ref{mlr2} and Algorithm 1 of \cite{levin2020streaming}. Since, the weight of an element $e$ now depends on the underlying set it would be added to, we (naturally) define the weight of $e$ to be the additional value $e$ provides after adding it to set $S$, i.e. $w(e)=f(e \mid S)$. If $e$ provides $S$ a good enough value, i.e, $f(e \mid S) \geq \alpha(w_1^*(e)+w_2^*(e))$, we add it to set $S$ but with a probability $q$ now. This probability $q$ is the most important difference between Algorithm \ref{ssmlr} and Algorithm \ref{submodular_alg}. This is a trick that we borrow from the Algorithm 1 of \cite{levin2020streaming} which is useful when $f$ is non-monotone because of the following Lemma 2.2 of \cite{buchbinder2014submodular}.  

\begin{lemma}[Lemma 2.2  in \cite{buchbinder2014submodular}] \label{buchbinder14}
Let $h: 2^E \rightarrow \mathbb{R}_{\geq 0}$ be a non-negative submodular function, and let $S$ be a a random subset of $E$ containing every element of $M$ with probability at most $q$(not necessarily independently), then $E[h(S)] \geq (1-q)h(\emptyset)$. 
\end{lemma}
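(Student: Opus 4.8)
The plan is to reduce, via an uncrossing argument, to the case where the law of $S$ is supported on a \emph{chain} of sets, where the bound becomes transparent.

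First, I would dispose of the trivial case $q\ge 1$: there the claim reads $\E[h(S)]\ge(1-q)h(\emptyset)$ with $(1-q)h(\emptyset)\le 0$, which holds since $h\ge 0$. So assume $q<1$. Identify the law of $S$ with a vector $p=(p_T)_{T\subseteq E}$ of probabilities, so that $\sum_T p_T=1$ and $\sum_{T\ni e}p_T=\Pr[e\in S]\le q$ for every $e\in E$; these (finitely many) constraints cut out a compact polytope $\mathcal P$, and $\E[h(S)]=\langle h,p\rangle:=\sum_T p_T h(T)$ is linear in $p$. It therefore attains its minimum over $\mathcal P$, and it suffices to prove $\langle h,p\rangle\ge(1-q)h(\emptyset)$ for a minimizer, since the true law of $S$ lies in $\mathcal P$.

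The key step is an uncrossing lemma. I would take a minimizer $p^\star$ that, among all minimizers, additionally maximizes $\Phi(p):=\sum_T p_T|T|^2$ (attained, since the set of minimizers is a compact subset of $\mathcal P$ and $\Phi$ is continuous), and argue that $\mathrm{supp}(p^\star)$ is a chain. Indeed, if it contained incomparable sets $T,T'$, then moving $\delta:=\min(p^\star_T,p^\star_{T'})>0$ units of mass off each of $T,T'$ and onto $T\cup T'$ and $T\cap T'$ keeps $p$ feasible — the crucial identity being $\mathbf 1[e\in T]+\mathbf 1[e\in T']=\mathbf 1[e\in T\cup T']+\mathbf 1[e\in T\cap T']$ for every $e$, which preserves both $\sum_T p_T$ and each marginal $\sum_{T\ni e}p_T$ — does not increase $\langle h,\cdot\rangle$ by submodularity of $h$ (hence, $p^\star$ being a minimizer, leaves it unchanged), yet strictly increases $\Phi$, because $|T\cup T'|^2+|T\cap T'|^2-|T|^2-|T'|^2=2\,|T\setminus T'|\,|T'\setminus T|>0$. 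This contradicts the choice of $p^\star$, so $\mathrm{supp}(p^\star)$ is a chain $A_0\subsetneq A_1\subsetneq\dots\subsetneq A_m$.

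It then remains to handle a chain-supported law, which is easy. Since $q<1$, the support must contain $\emptyset$: otherwise some element lies in every set of the chain and has marginal $1>q$. So $A_0=\emptyset$, and applying the marginal bound to any element of $A_1$ (when $m\ge 1$) gives $p^\star_\emptyset=1-\sum_{j\ge 1}p^\star_{A_j}\ge 1-q$, which is also trivially true when $m=0$. Combined with $h\ge 0$ this yields $\E[h(S)]\ge\langle h,p^\star\rangle\ge p^\star_\emptyset\,h(\emptyset)\ge(1-q)h(\emptyset)$, as desired. I expect the main obstacle to be the uncrossing step — specifically, justifying that a chain-supported minimizer actually exists (as opposed to merely observing that each uncrossing move never hurts, which does not by itself prove termination); passing to a $\Phi$-maximal minimizer, as above, is what makes this rigorous. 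Everything else is bookkeeping.
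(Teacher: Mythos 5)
Your argument is correct, but note that the paper does not prove this lemma at all: it is imported verbatim as Lemma~2.2 of \cite{buchbinder2014submodular} (originally due to Feige, Mirrokni and Vondr\'ak) and used as a black box, so there is no in-paper proof to compare against. What you have written is a self-contained proof of the cited result. Your route is essentially an explicit derivation of the fact that, for a submodular $h$, the convex closure coincides with the Lov\'asz extension: you uncross the support of a minimizing distribution with the given marginals into a chain, and then read off the bound from the mass that must sit on $\emptyset$. The standard proof is the same argument packaged differently --- one invokes $\E[h(S)]\geq \hat h(x)$ for the Lov\'asz extension $\hat h$ and marginal vector $x$ with $x_e\le q$, and then observes $\hat h(x)=\int_0^1 h(\{e: x_e\ge\theta\})\,d\theta\ge\int_q^1 h(\emptyset)\,d\theta=(1-q)h(\emptyset)$ --- so your chain-supported minimizer is exactly the level-set distribution underlying the Lov\'asz extension. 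The details all check out: the mass-transfer identity $\mathbf 1[e\in T]+\mathbf 1[e\in T']=\mathbf 1[e\in T\cup T']+\mathbf 1[e\in T\cap T']$ preserves feasibility, submodularity makes the move non-increasing for the objective, the potential $\Phi$ increases by $2\delta\,|T\setminus T'|\,|T'\setminus T|>0$ so the $\Phi$-maximal minimizer is chain-supported, and the chain case is handled correctly (including the observation that $q<1$ forces $\emptyset$ into the support with mass at least $1-q$). Your worry about termination of naive uncrossing is well placed, and passing to a $\Phi$-maximal minimizer resolves it cleanly; the only cosmetic remark is that the set of minimizers is a face of the polytope, hence compact, which you implicitly use and could state.
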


In our proof, we can relate the weight of the set that we pick and the value $f(S^* \cup S_f)$ where $S_f$ denotes the elements in the stack when algorithm stops and $S^*$ denotes the set of optimum elements. If the function $f$ is monotone, this is sufficient as $f(S^* \cup S_f) \geq f(S^*)$. This, however, is not true if function $f$ is non-monotone. Here, one can use the Lemma \ref{buchbinder14} with the function $h(T)=f(T \cup S^*)$. This enables us to conclude that $\E[f(S^* \cup S_f)] \geq (1-q)f(S^*).$   

\begin{algorithm} 
\caption{Extension of Algorithm \ref{ssmlr} to submodular functions} 
\label{submodular_alg} 
\begin{algorithmic} 
    \REQUIRE A stream of the elements and $2$ matroids (which we call $M_1,M_2$) on the same ground set $E$, a submodular function $f: 2^E \mapsto \mathbb{R}$, a real number $\alpha \geq 1$, a real number $q$ such that $0\leq q \leq 1$ and a real number $y$.
    \ENSURE A set $X\subseteq E$ that is independent in both matroids.
    \STATE Whenever we write an assignment of a variable with subscript $i$, it means we do it for $i=1,2$.
    \STATE $S \leftarrow \emptyset$
    \FOR{element $e$ in the stream}
    \STATE calculate $w^*_i(e) =\max\left( \{0\} \cup \lbrace \theta: e \in \spn_{M_i}\left(\lbrace f\in S\ |\ w_i(f)\geq \theta\rbrace\right)\rbrace \right)$.
    \IF{$f(e \mid S)>\alpha(w^*_1(e)+w^*_2(e))$}
    \STATE \textbf{with} probability $1-q$, \textbf{continue}; \hfill \COMMENT{//skip $e$ with probability $1-q$.}
    \STATE $g(e)\leftarrow f(e \mid S)- w^*_1(e)-w^*_2(e)$
    \STATE $S\leftarrow S\cup\lbrace e\rbrace$
    \STATE $w_i(e)\leftarrow g(e)+w^*_i(e)$
    \STATE Let $T_i$ be a maximum weight independent set of $M_i$ with respect to $w_i$.
    \STATE Let $g_{max}=\underset{e\in S}{\max}g(e)$
    \STATE Remove all elements $e'\in S$, such that $y\cdot g(e')<g_{max}$ and $e'\notin T_1\cup T_2$ from S.
    \ENDIF
    \ENDFOR
    \RETURN a maximum weight set $T\subseteq S$ that is independent in $M_1$ and $M_2$
    
\end{algorithmic}
\end{algorithm}

\subsection{Analysis of Algorithm \ref{submodular_alg}}
We extend the analysis of Section \ref{sec:algmemefficient} by using ideas from \cite{levin2020streaming} to analyze our algorithm. 
Before going into the technical details, we give a brief overview of our analysis. For sake of intuition, we assume that the Algorithm \ref{submodular_alg} does not delete elements and also does not skip elements with probability $1-q$. Then, due to the fact that the weight of an element $e$ is the additional value it provides to the current set $S$, one can relate the weight of the independent set picked with the weight of the optimal solution given the set $S_f$ i.e., $f(S^* \mid S_f)$ by basically using the analysis of the previous section. However, this is not enough as the weight of the optimal solution is $f(S^*)$. But, we can still relate the gain of $S_f$ to $f(S_f)$ similar to $\cite{levin2020streaming}$ which helps us relate $f(S^* \cup S_f)$ and weight of our solution. In order to extend it to the case when elements are skipped with probability $1-q$, we show the above to hold in expectation similar to $\cite{levin2020streaming}$ which is helpful for dealing with non-monotone functions because of Lemma \ref{buchbinder14}. Finally, we remark that one can use an analysis similar to Section \ref{sec:algmemefficient}, to show that the effect of deleting elements does not affect the weight of solution by a lot. 

Let $S_f$ denote the set $S$ generated when the algorithm stops and $S_f'$ denote the union of $S_f$ and the elements that were deleted by the algorithm. For sake of analysis, we define the weight function $w:E \rightarrow \mathbb{R}$ of an element $e$ to be the additional value it provided to the set $S$ when it appeared in the stream, i.e., $w(e)=f(e \mid S)$. Like before, we extend the definition of weight functions $w_1$ and $w_2$ for an element $e$ that is not added to $S$ as $w_i(e)=w_i^*(e)$ for $i \in \{1,2\}$. 
We note here that all the functions defined above are random variables which depend on the internal randomness of the algorithm. Unless we explicitly mention it, we generally talk about statements with respect to any fixed realization of the internal random choices of the algorithm.

In our analysis, we will prove properties about our algorithm that are already proven for Algorithms \ref{mlr2} and \ref{ssmlr} in the previous sections. Our proof strategy will be simply running Algorithm \ref{mlr2} or \ref{ssmlr} with the appropriate weight function which will mimick running our original algorithm. Hence, we will prove these statements in a black-box fashion. A weight function that we will use repeatedly in our proofs is $w': E \rightarrow \mathbb{R}_{\geq 0}$ where $w'(e)=w(e)$ if $e \in S_f'$, otherwise $w'(e)=0$. This basically has the effect of discarding elements not in $S_f'$ i.e, elements that were never picked by the algorithm either  because they did not provide a good enough value or because they did but were still skipped.

\begin{lemma} \label{submodular_lem1}
Consider the set $S_f'$ which is the union of $S_f$ generated by the Algorithm \ref{submodular_alg} and the elements it deletes. Then a maximum weight independent set in $M_i$ for $i \in \{1,2\}$ over the whole ground set $E$ can be selected to be a subset of $S_f'$, i.e. $T_i \subseteq S_f'$ and it satisfies $w_i(T_i)=g(S_f')$.
\end{lemma}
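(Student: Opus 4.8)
The plan is to reduce Lemma~\ref{submodular_lem1} to the already-established Lemma~\ref{goodsubset} by finding the right weight function to feed into Algorithm~\ref{mlr2}. Recall that Lemma~\ref{goodsubset} says precisely that when Algorithm~\ref{mlr2} is run (with some linear weight function $w$), the generated set $S$ contains a maximum-weight independent set $T_i$ of $M_i$ over the whole ground set, and that $w_i(T_i) = g(S)$. So the first step is to observe that running Algorithm~\ref{submodular_alg} and \emph{then throwing away the skipped elements} produces exactly the same set $S_f'$, the same weight functions $w_1, w_2$, and the same gain function $g$ as running Algorithm~\ref{mlr2} on the stream with the linear weight function $w'$ defined in the text (namely $w'(e) = w(e) = f(e\mid S)$ for $e\in S_f'$ and $w'(e) = 0$ otherwise). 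This is the ``black-box'' strategy the authors announced: once the random choices are fixed, Algorithm~\ref{submodular_alg} computes each $w_i^*(e)$, each $g(e)$ and each $w_i(e)$ by the \emph{same formulas} as Algorithm~\ref{mlr2}, with the only change being that the threshold test uses $f(e\mid S)$ in place of a static weight; but for elements actually added, $f(e\mid S)$ at arrival time equals $w'(e)$, and for elements either never passing the threshold or skipped, $w'(e)=0$ guarantees they also fail the test $w'(e) > w_1^*(e) + w_2^*(e)$ (since the right-hand side is $\geq 0$). Deletions only affect the later behaviour of Algorithm~\ref{ssmlr}/\ref{submodular_alg}, but $S_f'$ by definition includes the deleted elements, so from the point of view of the run that produces $S_f'$ there are no deletions to worry about.

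The second step is then immediate: apply Lemma~\ref{goodsubset} to the run of Algorithm~\ref{mlr2} on $(M_1, M_2, w')$. Its ``furthermore'' clause gives a maximum-weight independent set $T_i \subseteq S_f'$ in $M_i$ over the whole ground set, with $w_i(T_i) = g(S_f')$, where the weights $w_i$ and gains $g$ are those of the $w'$-run. Since these coincide with the $w_i$ and $g$ of Algorithm~\ref{submodular_alg}'s run (as argued in step one), we are done. One small bookkeeping point to state carefully: Lemma~\ref{goodsubset}'s last sentence shows $T_i$ is maximum-weight over \emph{all} of $E$ with respect to $w'$ (not just over $S_f'$), using that every $e \notin S_f'$ lies in $\mathrm{span}(\{f \in S_f' : w'_i(f) \geq w'_i(e)\})$; here $w'_i(e) = w_i^*(e)$ for skipped/rejected $e$ by our extension, and the span property holds by the very definition of $w_i^*$. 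So the conclusion transfers verbatim.

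I do not expect a genuine obstacle here; the lemma is essentially a restatement of Lemma~\ref{goodsubset} in the submodular wrapper, and the only thing requiring care is the \emph{precise} verification that the $w'$-run of Algorithm~\ref{mlr2} is step-for-step identical to the (realization-fixed) run of Algorithm~\ref{submodular_alg} on the elements that are kept in $S_f'$ --- in particular that an element is kept by Algorithm~\ref{submodular_alg} (passes the threshold and is not skipped) if and only if it would be added by Algorithm~\ref{mlr2} under $w'$. The potentially subtle direction is that a skipped element $e$ (which \emph{did} pass $f(e\mid S) > \alpha(w_1^*(e)+w_2^*(e))$) must not be added by the $w'$-run: this holds because $w'(e)=0 \leq w_1^*(e)+w_2^*(e)$, and since $w_1^*, w_2^*$ are monotone in the current set $S$, the values seen by the $w'$-run at $e$'s arrival are at most those seen at the end, so the strict inequality $w'(e) > w_1^*(e)+w_2^*(e)$ indeed fails. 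I would present the proof as: (i) fix a realization; (ii) define $w'$ and state the identity of the two runs with a one-paragraph justification; (iii) invoke Lemma~\ref{goodsubset}.

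\begin{proof}
Fix any realization of the internal random choices of Algorithm~\ref{submodular_alg}, so that $S_f$, $S_f'$, $w$, $w_1$, $w_2$ and $g$ become deterministic. Define the linear weight function $w': E \to \mathbb{R}_{\geq 0}$ by $w'(e) = w(e) = f(e \mid S)$ (the value of $e$ when it was processed) if $e \in S_f'$, and $w'(e) = 0$ otherwise, exactly as in the text. We claim that running Algorithm~\ref{mlr2} on the same stream with weight function $w'$ produces exactly the set $S_f'$, together with the same $w_1^*, w_2^*, g, w_1, w_2$ as computed along the run of Algorithm~\ref{submodular_alg}.

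We verify this by induction on the stream, maintaining the invariant that, just before processing an element, the current set $S$ and the functions $w_1, w_2$ agree in both runs. When an element $e$ arrives, both runs compute the same $w_1^*(e), w_2^*(e)$ from the (identical) current $S$ and $w_1, w_2$. If $e \in S_f'$, then by definition $e$ passed the test $f(e \mid S) > \alpha(w_1^*(e)+w_2^*(e))$ and was not skipped, and moreover $w'(e) = f(e \mid S)$; since $\alpha \geq 1$ we get $w'(e) > w_1^*(e) + w_2^*(e)$, so the $w'$-run also adds $e$, and both set $g(e) = w'(e) - w_1^*(e) - w_2^*(e)$ and $w_i(e) = g(e) + w_i^*(e)$ identically. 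If $e \notin S_f'$, then $w'(e) = 0 \leq w_1^*(e) + w_2^*(e)$, so the $w'$-run does not add $e$; and of course Algorithm~\ref{submodular_alg} also does not keep $e$ in $S_f'$. (Deletions performed by Algorithm~\ref{submodular_alg} are irrelevant, since $S_f'$ retains the deleted elements; and Algorithm~\ref{mlr2} performs no deletions.) Thus the invariant is preserved, and at the end the $w'$-run has generated precisely $S_f'$ with the claimed functions.

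Now apply Lemma~\ref{goodsubset} to this run of Algorithm~\ref{mlr2} on matroids $M_1, M_2$ and weight function $w'$. For each $i \in \{1,2\}$, its ``furthermore'' part gives a maximum-weight independent set $T_i$ of $M_i$ over the whole ground set $E$ (with respect to $w_i$) that can be chosen with $T_i \subseteq S_f'$, and it satisfies $w_i(T_i) = g(S_f')$. Since the weight functions $w_i$ and the gain function $g$ of this run coincide with those of Algorithm~\ref{submodular_alg}, the statement of the lemma follows.
\end{proof}
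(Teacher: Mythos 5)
Your proposal is correct and follows exactly the paper's own route: define the linear weight function $w'$ (equal to $f(e\mid S)$ on $S_f'$ and $0$ elsewhere), observe that Algorithm~\ref{mlr2} run with $w'$ reproduces the set $S_f'$ and the functions $w_1,w_2,g$, and then invoke Lemma~\ref{goodsubset}. You merely spell out the induction verifying the two runs coincide, which the paper leaves as a "notice that."
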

\begin{proof}
Consider running the Algorithm \ref{mlr2} with weight function $w'$. Notice that doing this generates a stack containing exactly the elements in the set $S_f'$ and exactly the same functions $w_1,w_2$ and $g$. Now by applying Lemma \ref{goodsubset}, we get our result.     
\end{proof}

We prove the following lemma similar to \cite{levin2020streaming} which relates the gain of elements in $S_f'$ to the weight of the optimal solution given the set $S_f'$ i.e, $f(S^* \mid S'_f)$. Notice that the below lemma holds only in expectation for $q \neq 1$. 

\begin{lemma}\label{submodular_lem2}
Denote the set $S_f'$ which is the union of $S_f$ generated by the Algorithm \ref{submodular_alg} with $q \in \{1/(2\alpha+1), 1 \}$ and the elements it deletes. Then, $\E [f(S^* \mid S'_f)]\leq 2\alpha \E[g(S_f')]$.
\end{lemma}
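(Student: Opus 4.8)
The plan is to reduce to the linear (non-submodular) statement already proved, namely Lemma~\ref{gee} (or its scaled version Lemma~\ref{nfty}), by running Algorithm~\ref{mlr2} with a cleverly chosen linear weight function, and then to correct for the probabilistic skipping via a conditioning/expectation argument borrowed from \cite{levin2020streaming}. First I would fix a realization of the internal randomness and consider the weight function $w'$ defined in the text ($w'(e)=w(e)=f(e\mid S)$ if $e\in S_f'$ and $w'(e)=0$ otherwise). As in the proof of Lemma~\ref{submodular_lem1}, running Algorithm~\ref{mlr2} with input weight $w'$ reproduces exactly the stack $S_f'$ together with the same auxiliary functions $w_1,w_2,g$; hence by Lemma~\ref{goodsubset} we have $g(S_f')\ge w_i'(S^*)$ for $i=1,2$, and therefore, exactly as in Lemma~\ref{gee}, $2g(S_f')\ge w_1'(S^*)+w_2'(S^*)\ge w'(S^*)=\sum_{e\in S^*\cap S_f'} f(e\mid S)$, where $S$ is the state of the stack right before $e$ was processed. (If instead one prefers to keep the selection threshold $\alpha$ explicit, the same step with Lemma~\ref{nfty} gives $2\alpha\,g(S_f')\ge w'(S^*)$; I will carry the factor $\alpha$ through accordingly.)

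The next step is the submodular accounting: I want to lower bound $\sum_{e\in S^*\cap S_f'} f(e\mid S_e)$ (writing $S_e$ for the stack when $e$ arrives) in terms of $f(S^*\mid S_f')$. For an element $e\in S^*$ that was \emph{added}, $S_e\subseteq S_f'$ so submodularity gives $f(e\mid S_e)\ge f(e\mid S_f')$. The delicate elements are those $e\in S^*$ that were \emph{rejected}: either because $f(e\mid S_e)\le \alpha(w_1^*(e)+w_2^*(e))$, or because the coin came up "skip." In the first case we still have the bound $w'(e)$ implicitly absorbed on the left side of the kernel/greedy inequality through $w_i^*(e)$; more precisely, I would argue as in the linear analysis that the quantity $w_1^*(e)+w_2^*(e)$ already "pays for" $e$ in the telescoping sum so that $S^*$'s contribution to $w_1'(S^*)+w_2'(S^*)$ is at least $\min(f(e\mid S_e),\ \alpha(w_1^*(e)+w_2^*(e)))$ — this is exactly the place where the kernel machinery of Lemma~\ref{lemma:atleastgain} / the greedy-optimality identity in Lemma~\ref{goodsubset} is used. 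For the skipped elements, here is where the probability $q$ and the expectation enter: an element is skipped only with probability $1-q$, and conditioned on all earlier choices, the expected loss from skipping $e$ is $(1-q)$ times the marginal value $e$ would have contributed. Summing telescopically over $S^*$ in stream order and using submodularity to telescope $\sum_{e\in S^*} f(e\mid S_e)\ge f(S^*\mid S_f')$ (the standard greedy-marginal telescoping, valid because $S_e\subseteq S_f'$ and the $S_e$ are nested along $S^*$'s arrival order), the expected total contribution is at least $q\cdot\big(\text{something}\ge f(S^*\mid S_f')\big)$ minus the paid-for part.

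Putting the two halves together and taking expectations, I expect to get $\E[f(S^*\mid S_f')] \le 2\alpha\,\E[g(S_f')]$ for the monotone/linear bookkeeping when $q=1$, and for the stated choice $q=1/(2\alpha+1)$ the extra factor is arranged so that the $q$-discount is absorbed into the same $2\alpha$ — this is why the lemma is stated for exactly $q\in\{1/(2\alpha+1),1\}$ and why it only holds in expectation. The main obstacle, I anticipate, is the careful treatment of the rejected-but-not-skipped elements of $S^*$: one must show that the $w_i^*$-terms in the threshold test line up with the telescoping sum in Lemma~\ref{goodsubset} in such a way that rejecting $e$ because $f(e\mid S_e)$ was too small does not lose more than a factor $\alpha$ against $w_1'(S^*)+w_2'(S^*)$, while simultaneously the nested structure $S_e\subseteq S_{e'}$ along $S^*$ needed for the submodular telescoping is genuinely available (it is, since we only ever add to $S$). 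Handling the interaction of this deterministic charging with the independent skip coins — i.e., checking that conditioning on "$e$ passes the threshold" does not bias the coin — is the second subtlety, but it follows because the coin for $e$ is drawn independently of the stack state at $e$.
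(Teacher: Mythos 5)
Your overall direction (simulate the linear algorithm with a surrogate weight function, then average over the skip coins) matches the paper's, and your $q=1$ reasoning is essentially the paper's (it uses $w''(e)=w(e)$ for $e\in S_f'$ and $w''(e)=w(e)/\alpha$ otherwise, so that \cref{gee} directly yields $w(S^*)\le 2\alpha\, g(S_f')$ and submodularity finishes). But for $q=1/(2\alpha+1)$ your argument has a genuine gap at its core. Applying \cref{gee} to $w'$ only gives $2 g(S_f')\ge w'(S^*)=\sum_{e\in S^*\cap S_f'}w(e)$: every element of $S^*$ that was never placed on the stack contributes zero, and these are exactly the elements you must control. Your plan to recover them --- ``the $w_i^*$-terms already pay for $e$ in the telescoping sum'' and ``the expected loss from skipping is $(1-q)$ times the marginal'' --- is not carried out and, as structured, double-charges the budget: the quantity $2g(S_f')$ has already been spent on $S^*\cap S_f'$ in your first step, so there is nothing left to charge the threshold-rejected and coin-skipped elements against. (Also, the kernel machinery of \cref{lemma:atleastgain} plays no role in bounding the optimum; it is only used to extract the feasible set $T$ from $S$.)

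The paper instead bypasses $w'(S^*)$ entirely and compares $\lambda(e):=f(e\mid S_f')$ to $\alpha\,(w_1(e)+w_2(e))$ element by element, in expectation over the coin, and only afterwards invokes the single budget $w_1(S^*)+w_2(S^*)\le w_1(T_1)+w_2(T_2)=2g(S_f')$ from \cref{submodular_lem1}, which covers \emph{all} of $S^*$ at once because $w_i$ is extended to non-selected elements by $w_i^*$. Concretely: if $e$ fails the threshold then $w_i(e)=w_i^*(e)$ and $\alpha(w_1(e)+w_2(e))\ge w(e)\ge\lambda(e)$ deterministically; if $e$ passes it, then conditioned on the current stack, $w_1(e)+w_2(e)$ equals $2w(e)-w_1^*(e)-w_2^*(e)$ with probability $q$ and $w_1^*(e)+w_2^*(e)$ otherwise, so its expectation is at least $2qw(e)$, and the choice $q=1/(2\alpha+1)$ makes $2q\alpha=1-q$, which is what lets the factor $2\alpha$ survive. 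This per-element expectation identity is the actual content of the lemma and is the step missing from your proposal; without it, a skipped element of $S^*$ has $w_1(e)+w_2(e)=w_1^*(e)+w_2^*(e)$, which can be arbitrarily small compared to $w(e)$, and no deterministic charging scheme can recover it.
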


\begin{proof}
We first prove the lemma for $q=1$ as the proof is easier than that for $q=1/(2\alpha+1)$. Consider running the Algorithm \ref{mlr2} with weight function $w'':E \rightarrow \mathbb{R}_{\geq0}$ defined as follows. If $e \in S_f'$, $w''(e)=w(e)$, else $w''(e)=w(e)/\alpha$. Notice that doing this generates a stack containing exactly the elements in the set $S_f'$ and exactly the same functions $w_1,w_2$ and $g$. Now by applying Lemma \ref{gee}, we get that $w(S^*)\leq 2\alpha g(S_f')$. By submodularity, we get $f(S^* \mid S_f')\leq 2\alpha g(S_f')$. 

Now, we prove the lemma for $q=1/(2\alpha+1)$. We first define $\lambda: E \rightarrow \mathbb{R}$ for an element $e \in E$ as $\lambda(e)=f(e \mid S_f')$. Notice that, by submodularity of $f$ and definition of $\lambda$, we have $f(S^* \mid S_f') \leq \lambda(S^*)$. Hence, it suffices to prove $\E [\lambda(S^*)]\leq 2\alpha \E[g(S_f')]$. We prove this below. 

Let the event that the element $e \in E$ does not give us a good enough value i.e, it satisfies $\alpha(w_1^*(e)+w_2^*(e)) \geq w(e)$ be $R_e$. We have two cases to consider now.  
\begin{enumerate}
    \item  The first is when  $R_e$ is true. Then, for any fixed choice of randomness of the algorithm for which $R_e$ is true, we argue as follows. By definition, $w_i(e)=w_i^*(e)$. Hence, $\alpha(w_1(e)+w_2(e)) \geq w(e)$. Also, $w(e)=f(e \mid S)$ where $S$ is the stack when $e$ appeared in the stream. As $S \subseteq S_f'$, by submodularity and definition of $\lambda$, we get that $w(e) \geq \lambda(e)$. Hence, we also get that $\alpha\E[w_1(e)+w_2(e)| R_e] \geq \E[\lambda(e)|R_e]$.

    \item The second is when $R_e$ is false. Then, for any fixed choice of randomness of the algorithm for which $R_e$ is false, we argue as follows. Here, $e$ is picked with probability $q$ given the set $S$ at the time $e$ appeared in the stream. If we pick $e$, then $w_1(e)+w_2(e) = g(e)+w_1^*(e)+g(e)+w_2^*(e) = 2w(e)-w_1^*(e)-w_2^*(e)$. Otherwise, if we do not pick $e$, then $w_1(e)+w_2(e)=w_1^*(e)+w_2^*(e)$. Hence, the expected value of $w_1(e)+w_2(e)$ satisfies, $$\E [w_1(e)+w_2(e)| \neg R_e, S]=2qw(e)+(1-2q)(w_1^*(e)+w_2^*(e)) \geq 2qw(e).$$ The last inequality follows as we have $q=1/(2\alpha+1) \leq1/2$. By the choice of $q$ and submodularity, we get that  $\alpha\E [w_1(e)+w_2(e)| \neg R_e, S] \geq  2q\alpha w(e) = (1-q)w(e)\geq(1-q)\lambda(e)$. By law of total expectation and conditioned on $R_e$ not taking place we get, $\alpha\E [w_1(e)+w_2(e)| \neg R_e] \geq \E[\lambda(e)|\neg R_e]$. 
\end{enumerate}

Finally by the law of total expectation and the points 1 and 2, we obtain that $\alpha\E[w_1(e)+w_2(e)] \geq \E[\lambda(e)]$ holds for any element $e \in E$. Applying this to elements of $S^*$, we get that $\alpha\E[w_1(S^*)+w_2(S^*)] \geq \E[\lambda(S^*)]$. On the other hand, by Lemma \ref{submodular_lem1}, we have $w_i(T_i)\geq w_i(S^*)$ (since $T_i$ is a max weight independent set in $M_i$ with respect to $w_i$) and $w_i(T_i)=g(S_f')$, thus $g(S_f')\geq w_i(S^*)$ for $i=1,2$. Hence, we get that $\E [\lambda(S^*)]\leq 2\alpha \E[g(S_f')]$.
\end{proof}

Since, we would like the relate the gain of elements in $S_f'$ to the optimal solution we bound the value of $f(S_f')$ in terms of the gain below similar to \cite{levin2020streaming}.

\begin{lemma}\label{submodular_lem3}
Consider the set $S_f'$ which is the union of $S_f$ generated by the Algorithm \ref{submodular_alg} and the elements it deletes. Then, $g(S_f') \geq (1-1/\alpha)f(S_f')$.  
\end{lemma}

\begin{proof}
By definition, any element $e \in S_f'$, should have satisfied $w(e) \geq \alpha(w_1^*(e)+w_2^*(e))$. Hence, $g(e)\geq w(e)-w(e)/\alpha$. Summing over all elements in $S_f'$, we get $g(S_f')\geq (1-1/\alpha)w(S_f') \geq f(S_f')$ where last inequality (not an equality as $S_f'$ also contains deleted elements) follows by definition of $w$ and submodularity of $f$. 
\end{proof}

Our algorithm only has the set $S_f$ and not $S_f'$ which also includes the deleted elements. Hence, in our next lemma, we prove that the gain of elements in these two sets is roughly the same.

\begin{lemma}\label{submodular_lem4}
Consider the set $S_f'$ which is the union of $S_f$ generated by running the Algorithm \ref{submodular_alg} with $\alpha>1$, $y=\min(r_1, r_2)/\delta^2$ for any $\delta$, such that $0 < \delta \leq \alpha-1$ and the elements it deletes. Here, $r_i$ is the rank of $M_i$ for $i \in \{1,2\}$. Then, $g(S_f') - g(S_f) \leq \delta g(S_f)$. Moreover, at any point during the execution, $S$ contains at most $r_1+r_2+ \min(r_1,r_2)\log_\alpha(\frac{y}{\alpha-1})$ elements.   
\end{lemma}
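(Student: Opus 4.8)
The plan is to mirror the argument of \cref{streaming} essentially verbatim, since the deletion rule in \cref{submodular_alg} is identical to that of \cref{ssmlr} and the only change is that $w(e)$ now means $f(e\mid S)$ at arrival time; but this quantity still satisfies the two structural facts the old proof relied on, namely $w(e)>\alpha(w_1^*(e)+w_2^*(e))$ for every selected element (which gives $w_i(f)<\tfrac1\delta g(f)$ once $\alpha-1\ge\delta$) and the greedy-swap behaviour of a maximum-$w_i$-weight independent set. So first I would fix a realization of the internal randomness and let $i\in\{1,2\}$ be the index of the matroid of smaller rank. I would then recall, exactly as in \cref{streaming}, that when an element $e$ is added and $T_i\cup\{e\}\notin I_i$, there is an element $e^*\in T_i$ with $w_i(e^*)=w_i^*(e)$ such that $T_i\setminus\{e^*\}\cup\{e\}$ is again a maximum-weight independent set; we say $e$ \emph{replaces} $e^*$, and by the selection rule $w_i(e)=g(e)+w_i^*(e)>\alpha w_i^*(e)=\alpha w_i(e^*)$.

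Next I would organise the elements of $S_f'$ into at most $r_i$ stacks: $e$ is pushed onto the stack of the element it replaces in $T_i$, and otherwise starts a new stack. At the end, each stack holds exactly one element of $T_i$, so there are exactly $r_i$ stacks. For a stack $S_e'$ let $e_{\mathrm{del}}(S_e')$ be the highest element of that stack that was ever removed from $S$. Since everything below a deleted element was deleted earlier and everything it displaced has smaller $w_i$-weight, $g(S_e')-g(S_e)\le w_i(e_{\mathrm{del}}(S_e'))<\tfrac1\delta g(e_{\mathrm{del}}(S_e'))$ where $S_e=S_e'\cap S_f$; here the last step uses $g(f)>(\alpha-1)w_i^*(f)\ge$ hmm — more directly, any selected $f$ has $w(f)>\alpha(w_1^*(f)+w_2^*(f))$ so $g(f)=w(f)-w_1^*(f)-w_2^*(f)>(\alpha-1)(w_1^*(f)+w_2^*(f))\ge(\alpha-1)w_i^*(f)\cdot$ — actually the clean bound is $w_i(f)=g(f)+w_i^*(f)$ and $g(f)>(\alpha-1)w_i^*(f)$ gives $w_i^*(f)<\tfrac1{\alpha-1}g(f)\le\tfrac1\delta g(f)$, hence $w_i(f)<(1+\tfrac1\delta)g(f)$; one then replaces the crude $\tfrac1\delta$ by this, or simply absorbs the constant — I will use $w_i(f)<\tfrac1\delta g(f)$ as in \cref{streaming} after noting $w_i^*\le w_i$ and being slightly generous. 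Since $e_{\mathrm{del}}(S_e')$ was deleted, $g(e_{\mathrm{del}}(S_e'))<g_{\max}/y\le g(S_f)/y$. Summing over the $r_i$ stacks and plugging $y=\min(r_1,r_2)/\delta^2=r_i/\delta^2$ gives $g(S_f')-g(S_f)<r_i\cdot\tfrac1\delta\cdot\tfrac{\delta^2}{r_i}g(S_f)=\delta g(S_f)$, as claimed.

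For the size bound I would argue by pigeonhole exactly as in \cref{streaming}: if $|S_f|>r_1+r_2+r_i\log_\alpha(y/(\alpha-1))$, then since the $T_1$- and $T_2$-protected elements number at most $r_1+r_2$, some stack $S_e$ contains more than $\log_\alpha(y/(\alpha-1))$ unprotected elements; because $w_i$ grows by a factor $>\alpha$ from each element to the one above it, the bottom unprotected element $f$ of that stack has $w_i(f)$ smaller than the top by a factor exceeding $y/(\alpha-1)$, so $g(f)\le w_i(f)<(\alpha-1)\cdot\tfrac{w_i(\text{top})}{y}\cdot\tfrac1{\alpha-1}$ — i.e. $g(f)<g_{\max}/y$, contradicting that $f$ was not deleted. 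I expect the only real subtlety — and hence the main obstacle — is bookkeeping the constant in the conversion $w_i(f)<c\cdot g(f)$ (it is $\tfrac1{\alpha-1}$ for $w_i^*$, hence $1+\tfrac1{\alpha-1}$ for $w_i$, not exactly $\tfrac1\delta$), and making sure the chosen $y$ and the requirement $\delta\le\alpha-1$ are exactly what makes the telescoped bound close; none of this is deep, it just needs to be done carefully so the final inequality is $\le\delta g(S_f)$ on the nose. Everything else transfers from \cref{streaming} unchanged because the deletion mechanism and the stack structure are literally the same.
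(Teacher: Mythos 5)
Your proposal is correct and is in substance the same argument as the paper's: the paper proves this lemma by noting that running \cref{ssmlr} on the auxiliary weight function $w'$ reproduces exactly the same stacks, deletions, and functions $w_1,w_2,g$, and then invoking the proof of \cref{streaming} as a black box, whereas you inline that proof directly (same stack decomposition via replacements in $T_i$, same bound via $e_{\mathrm{del}}$, same pigeonhole count). Your side remark about the constant — that the selection rule only gives $w_i(f)<\bigl(1+\tfrac{1}{\alpha-1}\bigr)g(f)$ rather than $\tfrac{1}{\delta}g(f)$ — is a fair observation, but it applies equally to the paper's own proof of \cref{streaming} and affects the final bound only in lower-order terms.
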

\begin{proof}
Consider running the Algorithm \ref{ssmlr} with weight function $w'$. Notice that doing this generates a stack containing exactly the elements as in the set $S_f$, exactly the same set of deleted elements and exactly the same functions $w_1,w_2$ and $g$. Moreover, this generates the exact same stacks as the Algorithm \ref{submodular_alg} at every point of execution. Now by the proof of Lemma \ref{streaming}, we get our result. 
\end{proof}

Lastly, we prove that there exists a set $T$ that is independent in both matroids and has a weight atleast the gain of the elements in $S_f$. 

\begin{lemma}\label{submodular_lem5}
 Let $S_f$ be the subset generated by Algorithm \ref{submodular_alg}. Then there exists a subset $T\subseteq S$ independent in $M_1$ and in $M_2$ such that $w(T) \geq g(S_f)$.
\end{lemma}
\begin{proof}
Consider running the Algorithm \ref{ssmlr} with weight function $w'$. Recall that for any element $e \in S_f'$, $w'(e)=w(e)$, otherwise $w'(e)=0$. Notice that doing this generates a stack containing exactly the elements as in the set $S_f$ and exactly the same functions $w_1,w_2$ and $g$. The result follows by Theorem \ref{mainresult}.  
\end{proof}

Now, we have all the lemmas to prove our main theorem which we state below. 

\begin{theorem}\label{submodular_th1}
The subset $S_f$ generated by \cref{submodular_alg} with $\alpha>1$, $q \in \{1/(2\alpha+1),1 \}$ and $y=\min(r_1, r_2)/\delta^2$ for any $\delta$, such that $0<\delta \leq \alpha-1$ contains a $(4\alpha^2-1)/(2\alpha-2) + O(\delta)$ approximation  in expectation for the intersection of two matroids with respect to a non-monotone submodular function $f$. This is optimized by taking $\alpha=1+\sqrt{3}/2$, resulting in an approximation ratio of $4+2\sqrt{3}+O(\delta) \sim 7.464$. Moreover, the same algorithm run with $q=1$ and $y=\min(r_1, r_2)/\delta^2$ is $(2\alpha + \alpha/(\alpha-1))+O(\delta)$ approximate if $f$ is monotone. This is optimized by taking $\alpha=1+1/\sqrt{2}$, which yields a $3+2\sqrt{2}+O(\delta) \sim 5.828$ approximation.
\end{theorem}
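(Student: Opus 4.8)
The plan is to assemble the already-established lemmas and carefully track the chain of (expected) inequalities relating $g(S_f)$, $g(S_f')$, $f(S_f')$, $f(S^* \mid S_f')$ and ultimately $f(S^*)$. First I would handle the non-monotone case with $q = 1/(2\alpha+1)$. Let $T \subseteq S_f$ be the independent set from \cref{submodular_lem5}, so $w(T) \ge g(S_f)$; since $w(e) = f(e \mid S)$ for the set $S$ at $e$'s arrival time and $S \subseteq S_f'$, submodularity gives $f(T) \ge w(T) \ge g(S_f)$ (here one uses that $f(T) = \sum_{e \in T} f(e \mid T_{<e})$ along the order in which elements of $T$ were added, and each such marginal dominates $w(e)$ — this is the same telescoping bookkeeping as in \cite{levin2020streaming}). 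So it suffices to lower-bound $g(S_f)$ against $f(S^*)$ in expectation.

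Next I would combine the three structural bounds. From \cref{submodular_lem2}, $\E[f(S^* \mid S_f')] \le 2\alpha\, \E[g(S_f')]$. From \cref{submodular_lem3}, $g(S_f') \ge (1 - 1/\alpha) f(S_f')$, equivalently $f(S_f') \le \frac{\alpha}{\alpha-1} g(S_f')$. Adding, and using $f(S^* \cup S_f') = f(S^* \mid S_f') + f(S_f')$,
\begin{align*}
\E[f(S^* \cup S_f')] \;\le\; \left(2\alpha + \frac{\alpha}{\alpha-1}\right) \E[g(S_f')] \;=\; \frac{4\alpha^2 - 1}{2(\alpha-1)}\,\E[g(S_f')].
\end{align*}
Now apply \cref{buchbinder14} with $h(\cdot) = f(\cdot \cup S^*)$ and the observation that every element of $E$ is placed in $S$ with probability at most $q$ (it is placed only if selected, which happens with probability $q$ conditioned on passing the gain test, and with probability $0$ otherwise): this yields $\E[f(S^* \cup S_f')] \ge (1-q) f(S^*)$. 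Finally \cref{submodular_lem4}, with $y = \min(r_1,r_2)/\delta^2$, gives $g(S_f') \le (1+\delta) g(S_f)$, so $\E[g(S_f')] \le (1+\delta)\E[g(S_f)]$. Chaining everything,
\begin{align*}
(1-q) f(S^*) \;\le\; \frac{4\alpha^2-1}{2(\alpha-1)}(1+\delta)\,\E[g(S_f)] \;\le\; \frac{4\alpha^2-1}{2(\alpha-1)}(1+\delta)\,\E[f(T)],
\end{align*}
so the approximation ratio is $\frac{4\alpha^2-1}{2(\alpha-1)(1-q)} (1+\delta)$. Substituting $q = 1/(2\alpha+1)$ gives $1 - q = 2\alpha/(2\alpha+1)$, hence ratio $\frac{(4\alpha^2-1)(2\alpha+1)}{2(\alpha-1)\cdot 2\alpha}(1+\delta) = \frac{(2\alpha+1)^2}{4\alpha(\alpha-1)}\cdot\frac{?}{}$ — I would simplify carefully; the paper claims it collapses to $\frac{4\alpha^2-1}{2\alpha-2} + O(\delta)$, which means they are folding the $(1-q)$ factor differently, perhaps because \cref{submodular_lem2} as stated already incorporates the $q$-dependent branch analysis so that the $(1-q)$ in \cref{buchbinder14} cancels against the $(1-q)w(e)$ bound appearing in case 2 of that lemma's proof. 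I would re-read \cref{submodular_lem2} to confirm the constant is exactly $2\alpha$ after that cancellation, then minimize $\frac{4\alpha^2-1}{2\alpha-2}$ over $\alpha > 1$: differentiating, the optimum is at $\alpha = 1 + \sqrt{3}/2$, giving value $4 + 2\sqrt{3}$.

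For the monotone case with $q = 1$, the probabilistic skipping is switched off, \cref{submodular_lem2} holds deterministically ($q=1$ branch, the easy one), monotonicity gives $f(S^*) \le f(S^* \cup S_f')$ directly without \cref{buchbinder14}, and the same two-bound addition yields $f(S^*) \le \left(2\alpha + \frac{\alpha}{\alpha-1}\right)(1+\delta) f(T)$; minimizing $2\alpha + \frac{\alpha}{\alpha-1}$ over $\alpha>1$ gives $\alpha = 1 + 1/\sqrt{2}$ and value $3 + 2\sqrt{2}$. The main obstacle I anticipate is bookkeeping the $O(\delta)$ error terms consistently — in particular verifying that the multiplicative $(1+\delta)$ loss from deletions, once multiplied into a constant of size $\Theta(1)$, is still $O(\delta)$ — and, more subtly, making sure the $q$-dependent factor is accounted for exactly once (either inside \cref{submodular_lem2} or via \cref{buchbinder14}, not both), since double-counting it would change the optimal $\alpha$ and the final constants. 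I would resolve this by writing out the single clean inequality $\E[f(S^* \cup S_f')] \le \frac{4\alpha^2-1}{2(\alpha-1)} \E[g(S_f')]$ first (valid for both values of $q$ given the lemmas as stated), then invoking \cref{buchbinder14} only in the non-monotone case to pass from $f(S^* \cup S_f')$ to $f(S^*)$, absorbing the $1/(1-q) = (2\alpha+1)/(2\alpha)$ factor — and checking that $\frac{4\alpha^2-1}{2(\alpha-1)} \cdot \frac{2\alpha+1}{2\alpha}$ does not match the claimed $\frac{4\alpha^2-1}{2\alpha-2}$, which would mean the $q$-factor is genuinely already inside \cref{submodular_lem2}'s statement and \cref{buchbinder14} supplies the $(1-q)$ that cancels it. That reconciliation is the one place where I would slow down and read the earlier proofs line by line rather than trusting the headline constants.
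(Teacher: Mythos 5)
Your overall route is exactly the paper's: chain \cref{submodular_lem2} and \cref{submodular_lem3} to bound $\E[f(S^*\cup S_f')]$ by a multiple of $\E[g(S_f')]$, pass to $g(S_f)$ via \cref{submodular_lem4}, extract $T$ with $f(T)\ge w(T)\ge g(S_f)$ via \cref{submodular_lem5} and submodularity, and finish with monotonicity (for $q=1$) or with \cref{buchbinder14} (for $q=1/(2\alpha+1)$). The one genuine problem is the algebra around the headline constant, and your proposed resolution of it is wrong. You assert
\[
2\alpha+\frac{\alpha}{\alpha-1}=\frac{4\alpha^2-1}{2(\alpha-1)},
\]
but the left-hand side equals $\frac{\alpha(2\alpha-1)}{\alpha-1}$, which is strictly smaller than $\frac{(2\alpha-1)(2\alpha+1)}{2(\alpha-1)}$. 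This false identity is what makes you suspect a double-counted $q$-factor. There is no hidden cancellation between \cref{submodular_lem2} and \cref{buchbinder14}: \cref{submodular_lem2} delivers the clean constant $2\alpha$ for either admissible value of $q$ (its internal case analysis is engineered precisely so that the $q$-dependence disappears from the final constant), and the $(1-q)$ factor from \cref{buchbinder14} is a genuine additional loss applied exactly once. With $q=1/(2\alpha+1)$ one has $1-q=2\alpha/(2\alpha+1)$, and
\[
\frac{1}{1-q}\left(2\alpha+\frac{\alpha}{\alpha-1}\right)=\frac{2\alpha+1}{2\alpha}\cdot\frac{\alpha(2\alpha-1)}{\alpha-1}=\frac{(2\alpha+1)(2\alpha-1)}{2(\alpha-1)}=\frac{4\alpha^2-1}{2\alpha-2},
\]
which is precisely the claimed non-monotone ratio; the multiplicative $(1+\delta)$ from deletions then contributes the $O(\delta)$ term. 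Once you replace your false identity with this computation, the remainder of your argument — including both optimizations, $\alpha=1+\sqrt{3}/2$ giving $4+2\sqrt{3}$ and $\alpha=1+1/\sqrt{2}$ giving $3+2\sqrt{2}$ — goes through and coincides with the paper's proof.
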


\begin{proof}
By Lemmas \ref{submodular_lem2} and \ref{submodular_lem3}, we have that $2\alpha\E [g(S_f')] \geq \E [f(S^* \mid S'_f)]$ and $g(S_f')(\alpha/(\alpha-1)) \geq f(S_f')$. Combining them, we get, $$(2\alpha + \alpha/(\alpha-1))\E [g(S_f')] \geq \E[f(S_f') + f(S^* \mid S_f')] = \E[f(S^* \cup S_f')].$$ By Lemma \ref{submodular_lem4}, we also get that $g(S_f')-g(S_f) \leq \delta g(S_f)$. This gives us that  
$$(2\alpha + \alpha/(\alpha-1))(1+\delta)\E [g(S_f)] \geq \E[f(S^* \cup S_f')].$$  

Now, by Lemma \ref{submodular_lem5}, there exists a subset $T \subseteq S_f$ independent in $M_1$ and $M_2$ such that $w(T) \geq g(S_f)$. By definition of $w$, and submodularity of $f$, we get that $f(T) \geq w(T)$. This in turn implies, $f(T) \geq g(S_f)$. This gives us that
$$(2\alpha + \alpha/(\alpha-1))(1+\delta)\E [f(T)] \geq \E [f(S^* \cup S_f')].$$ 

Notice that the above inequality also holds if $q=1$ as all the above arguments also work if $q=1$. Hence, if $f$ is monotone, we get $f(S^* \cup S_f) \geq f(S^*)$ which gives us our desired inequality by rearranging terms. However, if $f$ is non-monotone one has to work a little more which we show below.

To deal with the case when $f$ is non-monotone, we use Lemma \ref{buchbinder14} and take $h(T)=f(S^* \cup T)$ for any $T \subseteq E$ within the lemma statement, to get that $\E[f(S^* \cup S_f')] \geq (1-q)f(S^*)$ as every element of $E$ appears in $S_f'$ with probability at most $q$. Putting everything together, we get that   
$$(2\alpha + \alpha/(\alpha-1))(1+\delta)\E [f(T)] \geq (1-q)f(S^*).$$
Now, substituting the value of $q=1/(2\alpha+1)$ and rearranging terms, we get the desired inequality. 

\end{proof}

\begin{remark}
We can exactly match the approximation ratios in \cite{levin2020streaming} i.e, without the extra additive factor of $O(\delta)$ by not deleting elements. Moreover, $S$ stores at most $O(\min(r_1,r_2)\log_\alpha |E|)$ elements at any point if we assume that values of $f$ are polynomially bounded in $|E|$, an assumption that the authors in \cite{levin2020streaming} make.  
\end{remark}

\section{More than two matroids}\label{morethantwo}

We can easily extend \cref{ssmlr} to the intersection of $k$ matroids (see \cref{ssmlrk} for details). Most results remain true, in particular, we can have $kg(S)\geq(1+\epsilon)w(S^*)$ by carefully selecting $\alpha$ and $y$. The only part which does not work is the selection of the independent set from $S$. Indeed, matroid kernels are very specific to two matroids. We now prove that a similar approach fails, by proving that the logical generalization of kernels to 3 matroids is wrong, and that a counter-example can arise from \cref{ssmlrk}. Thus, any attempt to find a $k+\epsilon$ approximation using our techniques must bring some fundamentally new idea. Still, we conjecture that the generated set $S$ contains such an approximation.

\begin{proposition}\label{3bad}
There exists a set $S$ and 3 matroids $(S,I_1),(S,I_2),(S,I_3)$ such that there does not exist a set $T\subseteq S$ such that $S=D_{M_1}(T)\cup D_{M_2}(T)\cup D_{M_3}(T)$ (see \cref{lemma:atleastgain} for a definition of $D_{M_i}(T)$) and $T$ is independent in $M_1,M_2$ and $M_3$ where $<_i$ is given by $w_i$ generated by \cref{ssmlrk} (for $\alpha$ sufficiently small).
\end{proposition}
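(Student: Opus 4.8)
The plan is to exhibit one explicit instance. I would fix a small ground set $S$, define three partition matroids $M_1,M_2,M_3$ on $S$ whose dependent pairs are arranged cyclically around a core triple of elements (so that, roughly, $\{a,b\}$ is dependent in $M_1$, $\{b,c\}$ in $M_2$, $\{c,a\}$ in $M_3$), together with a few auxiliary elements sitting inside the shared blocks, a stream order, and a weight function $w$. The proof then splits into two essentially independent parts. Part (i): run \cref{ssmlrk} on this stream and check that, for every $\alpha$ sufficiently close to $1$, every element is selected — so the set $S$ produced by the algorithm is the whole ground set — and compute the resulting weight functions $w_1,w_2,w_3$, hence the total orders $<_i$. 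Part (ii): using these orders, verify by a finite case analysis over all common independent sets $T\subseteq S$ that $D_{M_1}(T)\cup D_{M_2}(T)\cup D_{M_3}(T)\subsetneq S$ in every case, so that no such $T$ is an $M_1M_2M_3$-kernel.

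For part (i) the key algebraic fact is that for every selected element $e$ one has $w_i(e)=w(e)-\sum_{j\neq i}w_j^*(e)$ (immediate from $w_i(e)=g(e)+w_i^*(e)$ and $g(e)=w(e)-\sum_j w_j^*(e)$). This identity is exactly what makes a cyclic preference pattern reachable: the weight of $e$ in $M_i$ is depressed precisely by the extent to which $e$ is spanned in the other two matroids, so by placing heavy auxiliary elements inside the blocks of $M_2$ and $M_3$ and ordering the stream so that they precede the relevant core element, one can force a core element to be beaten in $M_1$ by a core element that appeared earlier in the stream — something that is impossible with the three core elements alone, where all the $w_j^*$ corrections vanish and the induced orders among conflicting elements simply follow stream order and are therefore acyclic. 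Choosing $w$ to decrease rapidly along the stream makes every gain $g(e)$ positive and in fact larger than $(\alpha-1)\sum_j w_j^*(e)$ once $\alpha$ is close enough to $1$, so the selection test in \cref{ssmlrk} never rejects and the deletion step is never triggered on such a tiny instance.

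For part (ii), since each $M_i$ has small rank, the common independent sets form a short explicit list; for each candidate $T$ one computes $\spn_{M_i}(T)$ and checks element by element whether the domination condition — existence of $C_e\subseteq T$ with $e\in\spn_{M_i}(C_e)$ and $c<_i e$ for all $c\in C_e$ — holds. The construction is arranged so that whichever $T$ one picks, the element that is ``cyclically beaten'' in every matroid whose nontrivial block contains it is left out of $D_{M_1}(T)\cup D_{M_2}(T)\cup D_{M_3}(T)$.

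The hard part is building the instance so that (a) the conflict pattern genuinely admits no kernel for some preference profile — this is where the analogy with Fleiner's two-matroid theorem breaks, and it is the familiar non-existence of three-dimensional stable matchings — and (b) \cref{ssmlrk} actually realizes that profile. Requirement (b) is the real constraint: the comparisons the algorithm induces between conflicting elements are far from arbitrary, and a naive attempt to obtain all the required strict inequalities forces a cyclic condition on the stream order; the auxiliary elements and their stream positions must be chosen carefully, while simultaneously making sure no auxiliary element can be combined with a core element into an unintended kernel. Once a valid instance is fixed, both verifications are routine but somewhat lengthy bookkeeping.
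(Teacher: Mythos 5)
Your proposal is a strategy, not a proof. The statement is an existence claim, and the paper proves it by exhibiting a concrete five-element instance ($S=\{a,x,y,z,b\}$ with two non-partition matroid constructions built from forbidden pairs \emph{plus} conditions on which three-element sets are independent), running \cref{ssmlrk} on it explicitly, and then doing the finite case analysis. You correctly identify the two genuine difficulties — (a) the preference profile must admit no kernel, which is where the analogy with Fleiner's theorem breaks, and (b) the profile must actually be realizable by the algorithm, whose induced orders are heavily constrained by the stream order — but you then explicitly defer both: ``the hard part is building the instance,'' ``requirement (b) is the real constraint,'' ``once a valid instance is fixed, both verifications are routine.'' That deferred part \emph{is} the proof. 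Without a concrete $S$, matroids, stream order, and weights, nothing has been established.

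Beyond the missing instance, there are reasons to doubt that your proposed template (three partition matroids with a cyclic pattern of dependent pairs among a core triple, plus auxiliary elements in the blocks) instantiates cleanly. The paper's matroids are not partition matroids: the conditions that a three-element independent set in $M_1$ must contain $z$ (and cyclically for $M_2,M_3$) are exactly what kills the ``unintended kernels'' formed by combining the heavy element $b$ with two or three of $x,y,z$ — the candidate $T=\{y,z,b\}$ fails independence in $M_2$, not domination. You acknowledge the unintended-kernel problem in one clause but offer no mechanism for it, and with pure partition matroids the larger common independent sets tend to dominate everything, so this is not a cosmetic issue. Also, the paper's obstruction is not a three-way cyclic preference among core elements in the 3D-stable-matching sense; it hinges on a single low-weight element $a$ that conflicts with a different element in each matroid, forcing $T\subseteq\{a,b\}$ if $a\in T$, which then fails to dominate $x$. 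So even as a blueprint, your construction would need substantive new verification, and as written the proposal has a genuine gap.
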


\begin{proof}
We set $S=\lbrace a,x,y,z,b\rbrace$, which are given in this order to \cref{ssmlrk}. We now define $I_1,I_2,I_3$ in the following way. A set of 2 elements is in $I_i$ if and only if:

-In $I_1$ if it is not $\lbrace a,x\rbrace$

-In $I_2$ if it is not $\lbrace a,y\rbrace$

-In $I_3$ if it is not $\lbrace a,z\rbrace$

A set of 3 elements is in $I_i$ if and only if each of its subsets of 2 elements is in $I_i$ and:

-In $I_1$ if it contains $z$

-In $I_2$ if it contains $x$

-In $I_3$ if it contains $y$

A set of 4 elements is not in $I_i$.

Let us verify that these constraints correspond to matroids. As the problem is symmetrical, it is sufficient to verify that $M_1$ is a matroid. The 3 element independent sets in $M_1$ are exactly $\lbrace y,z,b\rbrace,\lbrace x,z,b \rbrace,\lbrace x,y,z\rbrace,\lbrace a,z,b \rbrace$ and $\lbrace a,y,z\rbrace$. Now we consider $X,Y \in I_1$ with $|X|<|Y|$. We should find $e\in Y\setminus X$ such that $X\cup \lbrace e\rbrace\in I_1$. If $X=\emptyset$, take any element from $Y$. If $X$ is a singleton, then there are two cases: either it is one of $X\subseteq\lbrace a,x\rbrace$, or it is not. In any case, $Y$ contains at most one element from $\lbrace a,x \rbrace$. As it contains at least two elements, $Y$ has to contain an element from $\lbrace y,z,b\rbrace$. In the first case, we can add any of these to $X$ to get an independent set. In the second case, $X\subseteq\lbrace y,z,b\rbrace$, so we can add any element to $X$ and it will remain independent, so just pick any element from $Y\setminus X$. If $X$ contains two elements, then $Y$ is one of the sets from the list above. In particular, it contains $z$. If $z\notin X$, then we can add $z$ to $X$. Otherwise, either $X\subseteq \lbrace y,z,b\rbrace$, in which case we can add any element, or $X$ is $\lbrace a,z \rbrace$ or $\lbrace x,z\rbrace$. In either case, $Y$ must contain an element from $\lbrace y,b \rbrace$, which we can add to $X$.

We now set the weights $w(a)=1,w(x)=w(y)=w(z)=3$ and $w(b)=8$ and run \cref{ssmlrk}.
\begin{itemize}
    \item Element $a$ has weight 1, and $\lbrace a\rbrace$ is independent in $M_1,M_2$ and $M_3$, so we set $w_1(a)=w_2(a)=w_3(a)=g(a)=1$ and $a$ is added to $S$.
    \item Element $x$ is spanned by $a$ in $M_1$, and not spanned by any element in $M_2$ and $M_3$, so we get $g(x)=w(x)-w_1^*(x)-w_2^*(x)-w_3^*(x)=3-1-0-0=2$. As $2>0$, we add $x$ to $S$. We also set $w_1(x)=3$ and $w_2(x)=w_3(x)=2$.
    \item Element $y$ and $z$ are very similar to $x$.
    \item Element $b$ is spanned in all three matroids by the elements of $w_i$ weight at least 2. On the other hand, $b$ is not spanned in any matroid by the elements of $w_i$ weight strictly bigger than 2, so $w_i^*(b)=2$ for $i=1,2,3$, thus $g(b)=8-2-2-2=2$ and $w_i(b)=2+2=4$ for every $i$. 
\end{itemize}

To recapitulate, we have $w_1(a)=1,w_1(x)=3,w_1(y)=w_1(z)=2,w_1(b)=4$ and the $w_2$ and $w_3$ weights are similar, with $y$ respectively $z$ being heavier.

Let us assume for a contradiction that $T$ is a solution to the problem.

$T$ must contain $b$, as it is the heaviest element in every matroid.

If $T$ contains $a$, then it cannot contain any of $x,y,z$, otherwise it would not be independent in one of the matroids, so we would have $T\subseteq\{a,b\}$. But $x$ has to be in at least one $D_{M_i}(T)$, and the set $\{x,b\}$ is independent in every matroid, and has a bigger weight than $\{a,b\}$, so $x$ would not be in $D_{M_i}(T)$. Thus $T$ cannot contain $a$.

As the problem is symmetrical for $\lbrace x,y,z\rbrace$, it is sufficient to test $T=\lbrace z,b\rbrace,T=\lbrace y,z,b\rbrace$ and $T=\lbrace x,y,z,b\rbrace$. The last two are not in $I_2$, so the only remaining possibility is $T=\lbrace z,b\rbrace$. But then $y$ is not in $D_{M_1}$ or $D_{M_3}$ because $\lbrace z,b, y\rbrace$ is independent in $M_1$ and $M_3$, and it is not in $D_{M_2}$ because $w_2(y) > w_2(z) \Leftrightarrow y<_2 z$ and $\lbrace y,b\rbrace$ is independent in $M_2$. As $y$ is not in any $D_{M_i}$, this concludes the proof.
\end{proof}

\begin{remark}
In the example of Proposition \ref{3bad}, we have $g(S)=w(a)+w(b)$, and $\lbrace a,b\rbrace$ is independent in all 3 matroids, so this does not contradict Conjecture \ref{conj}.
\end{remark}

\begin{conjecture}\label{conj}
The stack $S$ generated by \cref{mlr2} contains a $k$ approximation for any $k$.
\end{conjecture}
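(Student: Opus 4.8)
I would prove \cref{conj} in two stages: first reduce it to a purely combinatorial property of the set $S$, and then attack that property.

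\emph{Reduction.} The proof of \cref{gee} uses only the elementwise inequality $\sum_{i=1}^{k}w_i(e)\ge w(e)$ (which persists for $k$ matroids, with surplus $(k-1)g(e)$ when $e\in S$) together with \cref{goodsubset}, whose proof looks at one matroid at a time and is therefore unchanged. Hence the $k$-matroid version of \cref{mlr2} already satisfies $k\,g(S)\ge w(S^*)$ (and its memory-efficient variant satisfies $k(1+o(1))\,g(S)\ge w(S^*)$, by the $k$-matroid analogue of \cref{streaming}). So \cref{conj} reduces to: \emph{there is a set $T\subseteq S$ independent in all of $M_1,\dots,M_k$ with $w(T)\ge g(S)$}. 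Using $w(e)=g(e)+\sum_i w_i^*(e)$ for $e\in S$, this is the same as finding a common independent set $T\subseteq S$ with
\[
\sum_{i=1}^{k}w_i^*(T)\ \ge\ g(S\setminus T),
\]
i.e.\ each unit of gain deposited on a discarded element must be ``paid for'' by the inherited weights $w_i^*$ of the chosen set $T$.

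\emph{What fails, and what I would try instead.} For $k=2$ the only known proof is \cref{lemma:atleastgain}: take $T$ to be a matroid kernel, set $S_i:=D_{M_i}(T)$, use optimality of the greedy algorithm to get $w_i(T)\ge g(S_i)$, and finish with $\sum_i g(S_i)\ge g(S)+(k-1)g(T)$. Running this scheme for general $k$ requires a common independent set whose domination regions $D_{M_1}(T),\dots,D_{M_k}(T)$ cover $S$, and \cref{3bad} shows no such set need exist --- even on instances output by the algorithm. The plan is therefore to relax ``kernel'' to a fractional object and round it. Concretely, let $P$ be the intersection of the matroid polytopes of $M_1,\dots,M_k$ over the ground set $S$; it is nonempty, bounded and down-closed, so Scarf's lemma, applied with the orders $<_1,\dots,<_k$, yields a ``dominating'' vertex $\bar x\in P$: for every $e\in S$ outside $\operatorname{supp}(\bar x)$ there is an index $i$ and a tight rank inequality $\bar x(A)=\rank_i(A)$ with $e\in A$ and $e$ the $<_i$-worst element of $A$ --- a fractional stand-in for $e\in D_{M_i}(T)$. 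I would then (a) integrate the tight rank constraints against the layered weights $w_i$, as in the proof of \cref{goodsubset}, to obtain the fractional inequality $\sum_i\langle w_i^*,\bar x\rangle\ge\langle g,\mathbf{1}_{S}-\bar x\rangle$, and (b) round $\bar x$ to an integral common independent set $T$ without decreasing the left-hand side. As an alternative to (a)+(b), one can try induction on the elements of $S$ in insertion order, using that a freshly accepted element is, by the acceptance rule of \cref{mlr2}, strictly non-spanned in \emph{every} matroid at once, which should permit a coordinated exchange in all $k$ matroids while preserving a charging from discarded elements.

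\emph{Main obstacle.} Step (b) is the crux, and is exactly why \cref{conj} is unproved: the $k$-matroid intersection polytope is not integral for $k\ge 3$, so Scarf's fractional kernel does not round for free, while \cref{3bad} rules out any purely matroid-theoretic substitute. A proof must exploit that $w_1,\dots,w_k$ are not arbitrary but are built up in layers by the local-ratio process, which equips $S$ with a laminar/forest-like dependency structure (``$e$ inherits from $f$ in matroid $i$'' when $f$ lies in the certificate spanning $e$ at level $w_i^*(e)$). Rounding along that structure --- or running the induction along it --- is the new idea that would be needed.
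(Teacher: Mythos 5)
There is no proof to compare against: the statement you are addressing is \cref{conj}, which the paper explicitly leaves open. The authors prove only the ``reduction'' half themselves (they note that $k\,g(S)\gtrsim w(S^*)$ survives for $k$ matroids and that the sole missing piece is extracting a common independent set $T\subseteq S$ with $w(T)\ge g(S)$), and your reduction reproduces exactly that reasoning, correctly. But your proposal does not close the gap either, and you say so yourself: Scarf's lemma would hand you a fractional dominating vertex of the $k$-fold matroid intersection polytope, yet that polytope is not integral for $k\ge 3$, and no rounding argument preserving $\sum_i\langle w_i^*,\bar x\rangle\ge\langle g,\mathbf{1}_S-\bar x\rangle$ is supplied. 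Step (b) is not a technicality to be deferred --- it is the entire content of the conjecture, and \cref{3bad} already certifies that the integral object you would want the rounding to produce (a common independent set whose domination regions cover $S$) can fail to exist on instances generated by the algorithm itself. So what you have written is a research plan, not a proof.

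One concrete error in your fallback route: you assert that ``a freshly accepted element is, by the acceptance rule of \cref{mlr2}, strictly non-spanned in \emph{every} matroid at once.'' This is false. The acceptance test is $w(e)>\sum_i w_i^*(e)$, i.e.\ the weight must exceed the sum of the span thresholds; it does not require any $w_i^*(e)$ to be zero. In the paper's own four-element example ($a,b,c,d$), element $b$ is spanned by $a$ in $M_1$ at the moment it is accepted, and $c,d$ are spanned in $M_2$. Any induction built on that premise collapses at its first step, so the ``coordinated exchange in all $k$ matroids'' alternative is not available as stated.
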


In the case $k=2$, this corresponds to Theorem \ref{k2}. For any $k$, one can easily find examples were $S$ does not contain more than a $k$ approximation, but we were unable to find an example were it does not contain a $k$ approximation.

\newpage

\begin{algorithm} 
\caption{Extension of \cref{ssmlr} to $k$ matroids} 

\label{ssmlrk} 
\begin{algorithmic} 
    \REQUIRE A stream of the elements and $k$ matroids (which we call $M_1,\dotsc,M_k$) on the same ground set $E$, a real number $\alpha>1$ and a real number $y$.
    \ENSURE A set $S\subseteq E$ of ``saved'' elements.
    \STATE When we write an assignment of a variable with subscript $i$, it means we do it for $i=1,\dotsc,k$.
    \STATE $S \leftarrow \emptyset$
    \FOR{element $e$ in the stream}
    \STATE calculate $w^*_i(e) =\max\left( \{0\} \cup \lbrace \theta: e \in \spn_{M_i}\left(\lbrace f\in S\ |\ w_i(f)\geq \theta\rbrace\right)\rbrace \right)$.
    \IF{$w(e)>\alpha\sum_{i=1}^k w^*_i(e)$}
    \STATE $g(e)\leftarrow w(e)-\sum_{i=1}^k w^*_i(e)$
    \STATE $S\leftarrow S\cup\lbrace e\rbrace$
    \STATE $w_i(e)\leftarrow g(e)+w^*_i(e)$
    \STATE Let $T_i$ be a maximum weight independent set of $M_i$ with respect to $w_i$.
    \STATE Let $g_{max}=\underset{e\in S}{\max}g(e)$
    \STATE Remove all elements $e'\in S$, such that $y\cdot g(e')<g_{max}$ and $e'\notin \bigcup_{i=1}^k T_i$ from S.
    \ENDIF
    \ENDFOR
    
\end{algorithmic}
\end{algorithm}

\section{Acknowledgements}
The authors thank Moran Feldman for pointing us to the recent paper~\cite{levin2020streaming}.

\bibliographystyle{amsalpha}
\bibliography{bibliography}

\newcommand{\etalchar}[1]{$^{#1}$}
\providecommand{\bysame}{\leavevmode\hbox to3em{\hrulefill}\thinspace}
\providecommand{\MR}{\relax\ifhmode\unskip\space\fi MR }
\providecommand{\MRhref}[2]{%
  \href{http://www.ams.org/mathscinet-getitem?mr=#1}{#2}
}
\providecommand{\href}[2]{#2}
\begin{thebibliography}{BYBFR04}

\bibitem[BFNS14]{buchbinder2014submodular}
Niv Buchbinder, Moran Feldman, Joseph Naor, and Roy Schwartz, \emph{Submodular
  maximization with cardinality constraints}, Proceedings of the twenty-fifth
  annual ACM-SIAM symposium on Discrete algorithms, SIAM, 2014, pp.~1433--1452.

\bibitem[BYBFR04]{lrsurvey}
Reuven Bar-Yehuda, Keren Bendel, Ari Freund, and Dror Rawitz, \emph{Local
  ratio: A unified framework for approximation algorithms. in memoriam: Shimon
  even 1935-2004}, ACM Computing Surveys (CSUR) \textbf{36} (2004), 422--463.

\bibitem[BYE85]{lrtheorem}
R.~Bar-Yehuda and S.~Even, \emph{A local-ratio theorem for approximating the
  weighted vertex cover problem}, Analysis and Design of Algorithms for
  Combinatorial Problems (G.~Ausiello and M.~Lucertini, eds.), North-Holland
  Mathematics Studies, vol. 109, North-Holland, 1985, pp.~27 -- 45.

\bibitem[CK13]{Chak}
Amit Chakrabarti and Sagar Kale, \emph{Submodular maximization meets streaming:
  Matchings, matroids, and more}, CoRR \textbf{abs/1309.2038} (2013).

\bibitem[CS14]{Crouch}
M.~Crouch and D.M. Stubbs, \emph{Improved streaming algorithms for weighted
  matching, via unweighted matching}, Leibniz International Proceedings in
  Informatics, LIPIcs \textbf{28} (2014), 96--104.

\bibitem[Edm79]{EDMONDS197939}
Jack Edmonds, \emph{Matroid intersection}, Discrete Optimization I, Annals of
  Discrete Mathematics, vol.~4, Elsevier, 1979, pp.~39 -- 49.

\bibitem[FKM{\etalchar{+}}04]{Feigenbaum}
Joan Feigenbaum, Sampath Kannan, Andrew McGregor, Siddharth Suri, and Jian
  Zhang, \emph{On graph problems in a semi-streaming model}, 07 2004,
  pp.~531--543.

\bibitem[Fle01]{Fleiner}
Tam{\'a}s Fleiner, \emph{A matroid generalization of the stable matching
  polytope}, vol. 2081, 06 2001, pp.~105--114.

\bibitem[GW18]{Wajc}
Mohsen Ghaffari and David Wajc, \emph{{Simplified and Space-Optimal
  Semi-Streaming (2+epsilon)-Approximate Matching}}, 2nd Symposium on
  Simplicity in Algorithms (SOSA 2019) (Dagstuhl, Germany) (Jeremy~T. Fineman
  and Michael Mitzenmacher, eds.), OpenAccess Series in Informatics (OASIcs),
  vol.~69, Schloss Dagstuhl--Leibniz-Zentrum fuer Informatik, 2018,
  pp.~13:1--13:8.

\bibitem[HKMY20]{DBLP:journals/corr/abs-2002-05477}
Chien{-}Chung Huang, Naonori Kakimura, Simon Mauras, and Yuichi Yoshida,
  \emph{Approximability of monotone submodular function maximization under
  cardinality and matroid constraints in the streaming model}, CoRR
  \textbf{abs/2002.05477} (2020).

\bibitem[LW20]{levin2020streaming}
Roie Levin and David Wajc, \emph{Streaming submodular matching meets the
  primal-dual method}, arXiv preprint arXiv:2008.10062 (2020).

\bibitem[McG05]{McGregor}
Andrew McGregor, \emph{Finding graph matchings in data streams}, Approximation,
  Randomization and Combinatorial Optimization. Algorithms and Techniques
  (Berlin, Heidelberg) (Chandra Chekuri, Klaus Jansen, Jos{\'e} D.~P. Rolim,
  and Luca Trevisan, eds.), Springer Berlin Heidelberg, 2005, pp.~170--181.

\bibitem[Mut05]{Muthu}
S.~Muthukrishnan, \emph{Data streams: algorithms and applications}, Found.
  Trends Theor. Comput. Sci. \textbf{1} (2005), no.~2, 117--236. \MR{2379507}

\bibitem[PS17]{Paz}
Ami Paz and Gregory Schwartzman, \emph{A (2+{\(\epsilon\)})-approximation for
  maximum weight matching in the semi-streaming model}, CoRR
  \textbf{abs/1702.04536} (2017).

\bibitem[Sch03]{schrijver2003combinatorial}
A.~Schrijver, \emph{Combinatorial optimization: Polyhedra and efficiency},
  Algorithms and Combinatorics, Springer, 2003.

\end{thebibliography}

\end{document}